\newif\ifabstract
\newif\iffull
\newlength\abovesectionskip
\newlength\belowsectionskip
\def\sectionfont{\normalfont\Large\bfseries}
\newlength\abovesubsectionskip
\newlength\belowsubsectionskip
\def\subsectionfont{\normalfont\large\bfseries}
\newlength\abovesubsubsectionskip
\newlength\belowsubsubsectionskip
\def\subsubsectionfont{\normalfont\normalsize\bfseries}
\newlength\aboveparagraphskip
\newlength\belowparagraphskip
\def\paragraphfont{\normalfont\normalsize\bfseries}
\def\section{\@startsection{section}{1}{\z@}{-\abovesectionskip}%
               {\belowsectionskip}{\sectionfont}}
\def\subsection{\@startsection{subsection}{2}{\z@}{-\abovesubsectionskip}%
                  {\belowsubsectionskip}{\subsectionfont}}
\def\subsubsection{\@startsection{subsubsection}{3}{\z@}%
                     {-\abovesubsubsectionskip}{\belowsubsubsectionskip}%
                     {\subsubsectionfont}}
\def\paragraph{\@startsection{paragraph}{4}{\z@}{-\aboveparagraphskip}%
                 {\belowparagraphskip}{\paragraphfont}}
 \gdef\subsection{\@ifnextchar*\subsection@star\subsection@normal}
 \gdef\subsection@normal#1{\refstepcounter{subsection}%
            \paragraph{\thesubsection\hbox{~~}#1.}}
 \gdef\subsection@star*#1{\paragraph{#1.}}}
\def\GrabProofArgument[#1]{ #1: \egroup\ignorespaces}
\def\proof{\noindent\textbf\bgroup Proof%
           \@ifnextchar[{\GrabProofArgument}{: \egroup\ignorespaces}}
\def\endproof{\hspace*{\fill}$\Box$\medskip}
\newtheorem{thm}{Theorem}
\newtheorem{lem}[thm]{Lemma}
\theoremstyle{definition}
\newcommand{\term}[1]{\emph{#1}}
\newcommand{\tri}[2]{\mathcal T_{#1}(#2)}
\newcommand{\sector}[3]{\mathcal S_{#1}(#2,#3)}
\newcommand{\kitesweep}[2]{\mathcal{KS}_{#1}(#2)}
\newcommand{\chain}[2]{\mathcal C_{#1}(#2)}
\newcommand{\join}[4]{{#1}_{(#2)}\wedge {}_{(#4)}{#3}}
\newcommand{\freee}{\operatorname{free}}
\def\captionfont{\small}
\def\captionlabelfont{\bf\small}
{\makeatletter
 \global\let\plainfont@makecaption=\@makecaption
 \long\gdef\@makecaption#1#2{%
   \plainfont@makecaption{\captionlabelfont #1}{\captionfont #2}}}
\def\andlinebreak{\end{tabular}\linebreak\begin{tabular}[t]{c}}
\title{Hinged Dissections Exist}
\author{%
  Timothy G. Abbott%
    \thanks{MIT Computer Science and Artificial Intelligence Laboratory,
      32 Vassar Street, Cambridge, MA 02139, USA,
      \protect\url{{tabbott,edemaine,mdemaine}@mit.edu}}
    \thanks{Partially supported by an NSF Graduate Research Fellowship
            and an MIT-Akamai Presidential Fellowship.}
\and
  Zachary Abel%
    \thanks{Department of Mathematics, Harvard University,
      1 Oxford Street, Cambridge, MA 02138, USA.
      \protect\url{{zabel,kominers}@fas.harvard.edu}}
    \thanks{Corresponding author.}
\and
  David Charlton%
    \thanks{Department of Computer Science, Boston University,
      111 Cummington Street, Boston, MA 02215, USA.
      \protect\url{charlton@cs.bu.edu}}
\andlinebreak
  Erik D. Demaine\footnotemark[1]
  \thanks{Partially supported by NSF CAREER award CCF-0347776,
          DOE grant DE-FG02-04ER25647, and AFOSR grant FA9550-07-1-0538.}
\and
  Martin L. Demaine\footnotemark[1]
\and
  Scott D. Kominers\footnotemark[3]
}
\date{}
\begin{document}
\maketitle

\begin{abstract}
  We prove that any finite collection of polygons of equal area has a
  common hinged dissection.  That is, for any such collection
  of polygons there exists a chain of polygons
  hinged at vertices that can be folded in the plane continuously
  without self-intersection to form any polygon
  in the collection.  This result settles the open problem about
  the existence of hinged dissections between pairs of polygons that
  goes back implicitly to 1864 and has been studied extensively in the
  past ten years.  Our result generalizes and indeed builds upon the
  result from 1814 that polygons have common dissections (without
  hinges).  We also extend our common dissection result to edge-hinged
  dissections of solid 3D polyhedra that have a common (unhinged)
  dissection, as determined by Dehn's 1900 solution to Hilbert's Third
  Problem.  Our proofs are constructive, giving explicit algorithms in all cases.  For a constant number of planar polygons, both the
  number of pieces and running time required by our construction are
  pseudopolynomial.  This bound is the best possible, even for unhinged
  dissections.  Hinged dissections have possible applications to
  reconfigurable robotics, programmable matter, and nanomanufacturing.
\end{abstract}

\setcounter{page}0
\thispagestyle{empty}
\clearpage

\section{Introduction}

\begin{wrapfigure}{r}{2in}
  \centering
  \vspace{-6ex}
  \includegraphics[width=\linewidth]{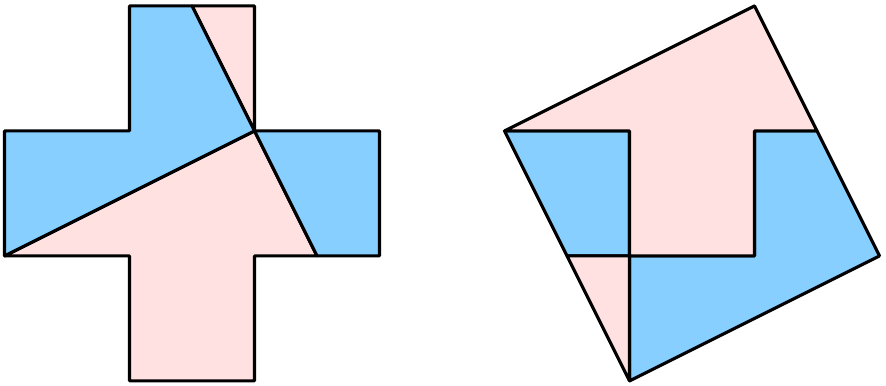}
  \iffull
    \caption{4-piece dissection of Greek cross to square from 1890
             \cite{Lemon-1890}.}
  \else
    \caption{1890 dissection of Greek cross to square \cite{Lemon-1890}.}
  \fi
  \label{fig:greek cross to square}
\end{wrapfigure}
Around 1808, Wallace asked whether every two polygons of the same area
have a common \emph{dissection}, that is, whether any two equal-area
polygons can be cut into a finite set of congruent polygonal pieces
\cite[p.~222]{Frederickson-1997}.  Figure~\ref{fig:greek cross to
  square} shows a simple example.  Lowry \cite{Lowry-1814} published
the first solution to Wallace's problem in 1814, although Wallace may
have also had a solution at the time; he published one in 1831
\cite{Wallace-1831}.  Shortly thereafter, Bolyai \cite{Bolyai-1832}
and Gerwien \cite{Gerwien-1833} rediscovered the result, whence this
result is sometimes known as the Bolyai-Gerwien Theorem.

By contrast, Dehn \cite{Dehn-1900} proved in 1900 that not all polyhedra of the
same volume have a common dissection, solving Hilbert's Third Problem
posed in the same year \cite{Dehn-1900}.  Sydler \cite{Sydler-1965}
showed that Dehn's invariant in fact characterizes 3D dissectability.

Lowry's 2D dissection construction,
as described by Frederickson \cite{Frederickson-1997},
is particularly elegant and uses a pseudopolynomial number of pieces.%
\footnote{In a geometric context, \emph{pseudopolynomial} means polynomial
  in the combinatorial complexity ($n$) and the dimensions of the integer grid
  on which the input is drawn.  Although the construction does not require
  the vertices to have rational coordinates, a pseudopolynomial analysis
  makes sense only in this case.}
A pseudopolynomial bound is the best possible in the worst case:
dissecting a polygon of diameter $x > 1$ into a polygon of diameter
$1$ (for example, a long skinny triangle into an equilateral triangle)
requires at least $x$ pieces.  With this worst-case result in hand,
attention has turned to optimal dissections using the fewest pieces
possible for the two given polygons. This problem has been studied
extensively for centuries in the mathematics literature
\cite{Ozanam-1778,Cohn-1975,Frederickson-1997} and the puzzle
literature
\cite{Panckoucke-1749,Lemon-1890,Madachy-1979-dissections,Lindgren-1972},
and more recently in the computational geometry literature
\cite{Czyzowicz-Kranakis-Urrutia-1999,Kranakis-Krizanc-Urrutia-2000,
  Akiyama-Nakamura-Nozaki-Ozawa-Sakai-2003}.

\emph{Hinged dissections} are dissections with an additional
constraint: the polygonal pieces must be hinged together at vertices
into a connected assembly.  The first published hinged dissection
appeared in 1864, illustrating Euclid's Proposition I.47
\cite{Kelland-1864}; see \cite[pp.~4--5]{Frederickson-2002}.  The most
famous hinged dissection is Dudeney's 1902 hinged dissection
\cite{Dudeney-1902}; see Figure~\ref{fig:dudeney}.  This surprising
construction inspired many to investigate hinged dissections; see, for
example, Frederickson's book on the topic \cite{Frederickson-2002}.

However, the fundamental problem of general hinged dissection has
remained open \cite{topp,cgcolumn}: do every two polygons of the same
area have a common hinged dissection?  This problem has been attacked
in the computational geometry literature
\cite{Akiyama-Nakamura-1998,polyforms,Eppstein-2001,polyforms3d}, but
has only been solved in special cases.  For example, all polygons made
from edge-to-edge gluings of $n$ identical subpolygons (such as
polyominoes) have been shown to have a common hinged dissection
\cite{polyforms}.  Perhaps most intriguingly, Eppstein
\cite{Eppstein-2001} showed that the problem of finding a hinged
dissection of any two triangles of equal area is just as hard as the
general problem.

\begin{figure*}[htbp]
  \centering
  \includegraphics[width=\linewidth]{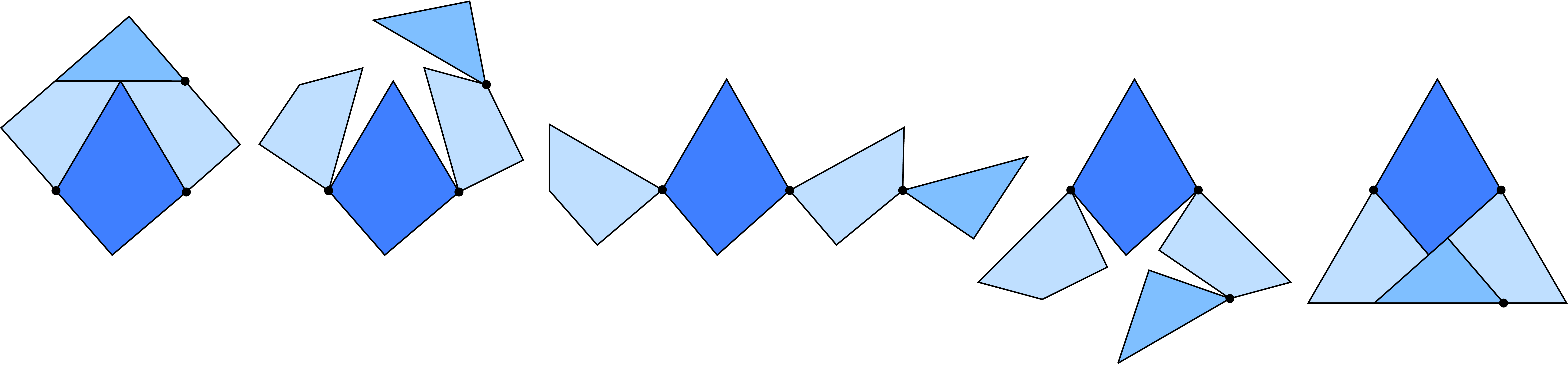}
  \caption{Dudeney's 1902 hinged dissection of a square into a triangle
           \protect\cite{Dudeney-1902}.}
  \label{fig:dudeney}
\end{figure*}

Hinged dissections are particularly exciting from the perspectives of
reconfigurable robotics, programmable matter, and nanomanufacturing.
Recent progress has enabled chemists to build millimeter-scale
``self-working'' 2D hinged dissections such as Dudeney's
\cite{Mao-Thallidi-Wolfe-Whitesides-Whitesides-2002}.
An analog for 3D hinged dissections may enable the building of a complex 3D structure
out of a chain of units; see \cite{Griffith-2004-PhD} for one such approach.
We could even envision an object that can
re-assemble itself into different 3D structures on demand \cite{polyforms3d}.
This approach contrasts existing approaches to reconfigurable robotics
(see, for example, \cite{Rus-Butler-Kotay-Vona-2002}), where units must
reconfigure by attaching and detaching from each other through a complicated
mechanism.

\paragraph{Our results.}

We settle the hinged dissection open problem, first formally posed in a
CCCG 1999 paper \cite{polyforms} but implicit back to 1864 \cite{Kelland-1864}
and 1902 \cite{Dudeney-1902}.
Specifically, Section~\ref{sec:equi} proves a universality result:
any two polygons of the same area have a common hinged dissection.
In fact, our result is stronger, building a single hinged dissection
that can fold into any finite set of desired polygons of the same area.
The analogous multipolygon result for (unhinged) dissections is
obvious---simply overlay the pairwise dissections---but no such general
combination technique is known for hinged dissections.
Indeed, the lack of such a transitivity construction has been the main
challenge in constructing general hinged dissections.

Our construction starts from an arbitrary (unhinged) dissection, such
as Lowry's \cite{Lowry-1814}.  We show that any dissection of a finite
set of polygons can be subdivided and hinged so that the resulting
hinged dissection folds into all of the original polygons.  We give a
method of subdividing pieces of a hinged figure which effectively
allows us to ``unhinge'' a portion of the figure and ``re-attach'' it
at an alternate location.
This construction allows us to ``move'' pieces and hinges around
arbitrarily, at the cost of extra pieces. Therefore, we are able to
hinge any dissection.

This na\"ive construction easily leads to an exponential number of
pieces, but we show in Section~\ref{sec:pseudopoly} that a more
careful execution of Lowry's dissection \cite{Lowry-1814} attains a
pseudopolynomial number of pieces for a constant number of target
polygons.  As mentioned above, such a bound is essentially best
possible, even for unhinged dissections.  
This more efficient construction requires significantly more complex
gadgets for simultaneously moving several groups of pieces at roughly
the same cost as moving a single piece, and relies on specific
properties of Lowry's dissection.

We also solve another open problem concerning the precise model of
hinged dissections.  In perhaps the most natural model of hinged dissections,
pieces cannot properly overlap during the folding motion from one
configuration to another.  However, all theoretical work concerning hinged
dissections \cite{Akiyama-Nakamura-1998,polyforms,Eppstein-2001,polyforms3d}
has only been able to analyze the ``wobbly hinged'' model
\cite{Frederickson-2002}, where pieces may intersect during the motion.
Is there a difference between these two models?
Again this problem was first formally posed at CCCG 1999 \cite{polyforms}.
We prove in Section~\ref{sec:motion} that any wobbly hinged dissection
can be subdivided to enable continuous motions without piece intersection,
at the cost of increasing the combinatorial complexity of the
hinged dissection by only a constant factor.
This result builds on expansive motions from the Carpenter's Rule Theorem
\cite{Connelly-Demaine-Rote-2003,Streinu-2005} combined with the theory
of slender adornments from SoCG 2006 \cite{planarshapes}.

The following theorem summarizes our results for 2D figures:

\begin{thm} \label{thm:main2D} Any finite set of polygons of equal
  area have a common hinged dissection which can fold continuously
  without intersection between the polygons.  For a constant number of
  target polygons with vertices drawn on a rational grid, the number
  of pieces is pseudopolynomial, as is the algorithm to compute the
  common hinged dissection.
\end{thm}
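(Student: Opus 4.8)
The plan is to split Theorem~\ref{thm:main2D} into its three assertions---existence of a common hinged dissection, the pseudopolynomial piece bound for a constant number of targets, and intersection-free continuous folding---and to establish each by assembling the machinery developed in Sections~\ref{sec:equi}, \ref{sec:pseudopoly}, and~\ref{sec:motion}. First I would reduce to the problem of \emph{hinging} a given dissection: for the finite family $\{P_1,\dots,P_k\}$ of equal-area polygons, the Bolyai--Gerwien theorem supplies a common (unhinged) dissection of each pair, and overlaying these pairwise cut patterns inside a fixed reference polygon produces a single unhinged dissection whose pieces reassemble into every $P_i$ simultaneously. It remains only to turn this into a hinged dissection that folds to all of the $P_i$.

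The core of the existence argument is the subdivide-and-rehinge operation. Given a hinged figure, this gadget cuts it so that a chosen sub-assembly may be detached along one hinge and re-attached at a different vertex, thereby letting me ``move'' pieces and hinges around arbitrarily at the cost of additional pieces. Starting from a single master chain, I would realize each of the finitely many target reassemblies as a sequence of such hinge-preserving moves, so that every $P_i$ is a folding of one common hinged structure. This settles existence in the wobbly-hinged model.

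The pseudopolynomial bound is, I expect, the main obstacle, since the naive iteration of the move gadget blows up exponentially: each relocated piece may itself be repeatedly subdivided as further moves act on it. To control this I would hinge not an arbitrary dissection but specifically Lowry's construction \cite{Lowry-1814}, whose strip/staircase structure lets an entire group of pieces be relocated together by a single batched gadget at essentially the cost of moving one piece. A careful accounting of the resulting pieces and of the running time should then show both to be polynomial in the combinatorial complexity $n$ and in the grid size once the number $k$ of targets is held constant, matching the $\Omega(x)$ lower bound already noted for unhinged dissections.

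Finally, to upgrade the wobbly hinged dissection to one that folds \emph{without self-intersection}, I would invoke the result of Section~\ref{sec:motion}: any wobbly hinged dissection can be subdivided, at only a constant-factor increase in complexity, so that the motion between any two of its configurations avoids piece overlap. That motion is built from the expansive motions of the Carpenter's Rule Theorem \cite{Connelly-Demaine-Rote-2003,Streinu-2005} together with the slender-adornment analysis of \cite{planarshapes}, which certifies that the thin sleeves added around the chain do not collide during folding. Applying this step to the dissection produced above preserves the pseudopolynomial bound and yields the full statement.
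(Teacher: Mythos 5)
Your proposal is correct and follows essentially the same route as the paper: a Bolyai--Gerwien common dissection hinged via iterated (pseudo)subtree-movement gadgets for existence, a more careful Lowry/strip-based execution with batched movement gadgets for the pseudopolynomial bound, and the slender-adornment/Carpenter's-Rule refinement of Section~\ref{sec:motion} (costing only a constant factor) for intersection-free continuous folding. This matches how the paper distributes the proof across Theorems~\ref{thm:common-refinement}, \ref{thm:pseudopoly}, and~\ref{thm:motion}.
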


Finally, we generalize our results to 3D in Section~\ref{sec:3D}.
As mentioned above, not all 3D polyhedra have a common dissection
even without hinges.  Our techniques generalize to show that
hinged dissections exist whenever dissections do:

\begin{thm} \label{thm:main3D}
  If two 3D polyhedra have the same volume and Dehn invariant,
  then they have a common hinged dissection.
\end{thm}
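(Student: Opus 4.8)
The plan is to mirror the two-dimensional argument of Section~\ref{sec:equi}, replacing vertex hinges by edge hinges throughout, and to use the classical Dehn--Sydler theory only as a black box that supplies a starting (unhinged) dissection. First I would invoke Sydler's theorem: since the two polyhedra have equal volume and equal Dehn invariant, they admit a common \emph{unhinged} dissection into finitely many polyhedral pieces $S_1, \dots, S_m$, together with two placements realizing each target. Thus, exactly as in 2D, it suffices to prove the reduction statement ``any common (unhinged) dissection of polyhedra can be subdivided and edge-hinged into a common hinged dissection that folds into each original polyhedron.'' Note that the theorem claims only existence of a common hinged dissection, not a non-self-intersecting motion, so I would not need the motion-planning machinery (Carpenter's Rule Theorem, slender adornments) used in the 2D result; this is the one place where 3D is genuinely easier.

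Next I would put the pieces into a normal form. Tetrahedralize each $S_i$ once and for all and carry this fixed tetrahedralization through both placements, so that the resulting tetrahedra assemble into \emph{both} target polyhedra. This is the 3D analog of the 2D reduction to triangles, with one crucial caveat: I must never appeal to ``any two equal-volume tetrahedra are interchangeable,'' since the Dehn invariant forbids this; the tetrahedra I work with are genuinely equidecomposable only because they arise from cutting up one common dissection. The key conceptual observation is that an edge hinge in 3D supplies exactly one rotational degree of freedom (dihedral rotation about the shared edge), just as a vertex hinge supplies one rotational degree of freedom in the plane; so edge hinges are the correct 3D counterpart, and the 2D ``unhinge a sub-assembly and re-attach it elsewhere'' operation should have a faithful 3D shadow.

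The hard part will be building the 3D analog of the 2D moving gadget. In the plane, a single vertex hinge already lets a triangle be swung to any orientation about a chosen point, so relocation is cheap. In space, however, an edge hinge pins down two degrees of freedom (the carried piece must share the full hinge edge) and leaves only one, whereas an arbitrary rigid relocation of a tetrahedron requires six; so I must compose a short sequence of edge rotations---in effect, a small polyhedral ``linkage'' assembled from auxiliary carrier pieces---to move a tetrahedron from its position in one target to its position in the other. Designing these carrier pieces so that in each endpoint configuration the union of all pieces still exactly tiles the ambient polyhedron (no gaps, no overlaps), while paying only extra pieces and hinges, is the crux of the construction and the step I expect to consume most of the work.

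Finally, I would assemble the gadgets into a single connected edge-hinged chain (or tree) and verify that the configuration space of this linkage contains a continuous path joining the two target foldings. Because self-intersection during the motion is permitted by the statement, it suffices to connect the endpoints gadget-by-gadget through intermediate configurations, using the one-parameter dihedral freedom of each hinge; the connectivity of the resulting tree-linkage's configuration space then delivers the required folding motion and completes the proof of Theorem~\ref{thm:main3D}.
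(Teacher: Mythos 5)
Your high-level architecture matches the paper's: invoke Sydler's theorem as a black box to get an unhinged common dissection, reduce to the claim that any such dissection can be refined into an edge-hinged one, tetrahedralize the pieces, and drive everything by a 3D analog of rooted-subtree movement. You are also right that no continuous-motion machinery is needed (the paper obtains only configurations in 3D), and your caveat about not treating equal-volume tetrahedra as interchangeable is well taken. But the proposal has a genuine gap exactly where you say you expect ``most of the work'' to be: you never construct the 3D subtree-movement gadget, and the way you frame it---composing a short sequence of edge rotations through auxiliary carrier pieces so as to realize an arbitrary six-degree-of-freedom rigid relocation of a tetrahedron---is not what is needed and would likely lead you astray. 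Subtree movement is not about transporting a piece by a rigid motion within one configuration; it is about exhibiting a single hinged figure with \emph{two} configurations, one refining $F=\join{A}{a}{B}{b}$ and one refining $F'=\join{A}{a}{B}{b'}$. The mechanism that achieves this, in 2D and in 3D alike, is a swap of two congruent chains, not a composition of rotations.

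Concretely, the paper's construction (Theorem~\ref{thm:move_3d_subtree}) requires several ingredients absent from your sketch: (1) a 3D notion of free regions for a tetrahedron---shallow tetrahedra erected inward on each face and thin cylindrical wedges along each edge, chosen pairwise disjoint; (2) a piecewise-linear path $\gamma$ on the boundary $2$-manifold $\partial F$ from a point on the source edge to a point on the destination edge, thickened to a strip $\Gamma$ of width $2\omega$ and subdivided into small rectangles and right triangles; (3) a chain $C$ of short pyramids erected on these cells, hinged along shared edges, which lies in the face free-regions and---crucially---can also be folded up into a compact ``kite-sweep'' of octahedra and bipyramids around a single segment, small enough to fit inside the cylindrical free wedge at the edge $a$ of $A$; and (4) a congruent chain $D$ carved out of $A$ in that folded form (re-tetrahedralized to keep links simple). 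In one configuration $C$ fills the holes along $\Gamma$ and $D$ fills the cavity in $A$; in the other they exchange roles, which is what reattaches $A$ at $b'$. Without this gadget (and its pseudosubtree version, needed for the induction over the $k$ links), the reduction you state in your first paragraph remains unproved, so the proposal does not yet constitute a proof.
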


\section{Terminology}\label{sec:terminology}

\begin{wrapfigure}{r}{3in}
  \centering
  \vspace*{-4ex}
  \includegraphics[scale=.65]{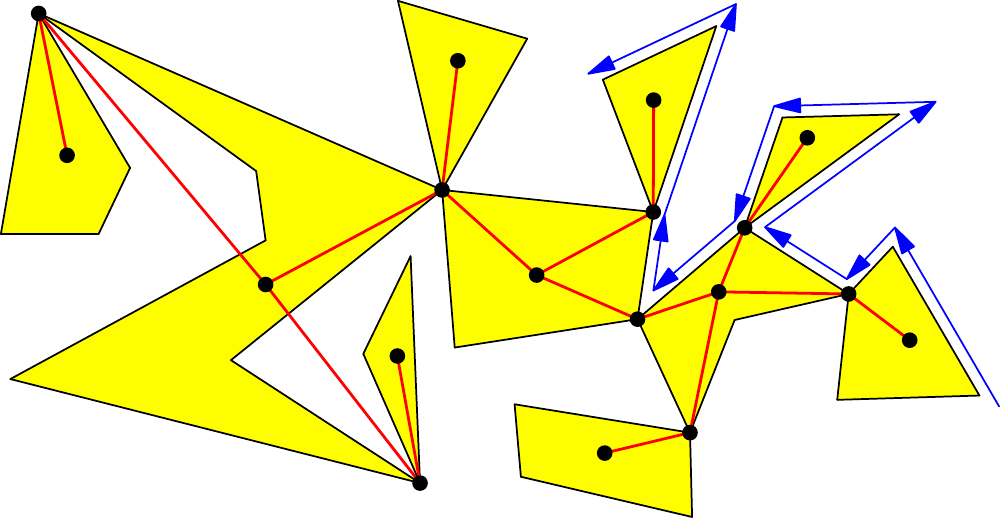}
  \caption{A hinged figure, its incidence graph (red), and part of is
    boundary path.}
  \label{fig:inc_graph}
\end{wrapfigure}

A \term{hinged figure} $F$ is a finite collection of simple, oriented
polygons (the \term{links}) hinged together at rotatable joints at the
links' vertices so that the resulting figure is connected, together
with a fixed cyclic order of links around each hinge. (Note that a
hinge might exist at a $180^\circ$ angle of a link, but this hinge is
still considered a vertex of the link.)
A \term{configuration} of a hinged figure $F$ is an embedding of $F$'s
links into the plane so that the links' interiors are disjoint and so
that each hinge's cyclic link order is maintained.

The \term{incidence graph} of a hinged figure is the graph that has a
vertex corresponding to every link and every hinge, such that two
nodes are connected by an edge if one represents a link and the other
represents a hinge on that link. See Figure~\ref{fig:inc_graph}.
A hinged figure is \term{tree-like} if the incidence graph is a tree,
and it is \term{chain-like} if the incidence graph is an open chain.

The \term{boundary} $\partial A$ of a hinged figure $A$ is the
oriented path (or collection of paths) along the edges of the links
traversed in depth-first order, as illustrated in Figure
\ref{fig:inc_graph}. For a tree- or chain-like figure, the boundary
consists of a single path incorporating all edges of the links. Note
that the boundary path will trace each hinge point multiple times, but
we distinguish these as different boundary points.

For two hinged figures $A$ and $B$, we say that $B$ is a
\term{refinement} of $A$, and write $B\prec A$, if $A$ can be obtained
from $B$ by gluing together portions of $B$'s boundary, i.e.,
by adding hinges between pieces of $B$ which may effectively glue
together shared edges of pieces in~$B$.
Intuitively, one could obtain $B$ from $A$ by cutting the pieces in $A$
and breaking some of the hinges.
The gluing in the definition gives rise to an imposed configuration of $B$
for every configuration of~$A$.
The property of refinement is transitive;
that is, if $C\prec B$ and $B\prec A$, then $C\prec A$.
This transitivity of refinement plays a central role in the arguments below.

\section{Universal Hinged Dissection}
\label{sec:equi}

In this section, we show that any finite collection of equal-area
polygons has a common hinged dissection.
More precisely, we construct a hinged figure with a configuration in
the shape of every desired polygon; continuous motions
without intersection will be addressed in Section~\ref{sec:motion}.
%
%
The proof is in three parts: effectively moving rooted subtrees,
effectively moving rooted pseudosubtrees, and arbitrarily rearranging
pseudotrees.

\subsection{Moving Rooted Subtrees}
Consider a tree-like hinged figure $F$. If there are two hinged
figures $A$ and $B$ with two distinguished boundary points
$a\in\partial A$ and $b\in\partial B$ so that $F$ is equivalent to the
hinged figure obtained by identifying points $a$ and $b$ to a single
hinge (denoted $F = \join{A}{a}{B}{b}$), then we say $A$ and $B$ are
each \term{rooted subtrees} of $F$. If another boundary point
$b'\in\partial B$ is chosen, then the new hinged figure $F' =
\join{A}{a}{B}{b'}$ is related to $F$ by a \term{rooted subtree
  movement}: $(A,a)$ is the subtree that has been moved.

Our goal is to accomplish this movement with hinged dissection.
We will achieve this goal by connecting pieces with
\term{chains} of isosceles triangles hinged at their base vertices. We
begin with a lemma concerning cutting isosceles
triangles from polygons, and then proceed to construct the required
dissection by cutting out chains from both $A$ (at the point $a$) and
$B$ (along the boundary from $b$ to $b'$).

For an angle $\alpha<90^\circ$ and a length $\ell$, denote by
$\tri{\alpha}{\ell}$ the isosceles triangle with base of length $\ell$
and base-angles $\alpha$. For a segment $PQ$, use the notation
$\tri{\alpha}{PQ}$ for the triangle $\tri{\alpha}{|PQ|}$ drawn with
base along segment $PQ$. Finally, for an angle $\beta$, point $P$, and
radius $r$, let $\sector{\beta}{P}{r}$ be a circular sector centered
at $P$ with angle $\beta$ and radius $r$.

\begin{lem}\label{lem:free-regions}
  For any simple polygon $V=V_1\ldots V_n$, there exist an angle
  $\beta$ and a radius $r$ small enough so that the triangles
  $\tri{\beta}{V_iV_{i+1}}$ constructed inward along the edges, as
  well as circular sectors $\sector{\beta}{V_i}{r}$ drawn inside $V$,
  are pairwise disjoint except at the vertices of $V$.  These
  triangles and sectors will be called the \term{free-regions} for
  their respective edges or vertices of $V$.
\end{lem}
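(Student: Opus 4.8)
The plan is to split the argument into a \emph{local} analysis near each vertex and a \emph{global} analysis for features that lie far apart, and then to pick $\beta$ and $r$ small enough to satisfy both. First I fix the placement of each vertex-sector: I orient $\sector{\beta}{V_i}{r}$ so that it bisects the interior angle of $V$ at $V_i$. Write $\theta_i$ for that interior angle and set $\theta_{\min}=\min_i\theta_i$, which is positive since $V$ is a simple polygon and so every interior angle lies strictly between $0$ and $2\pi$.

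For the local analysis, note that within a disk about $V_i$ of radius smaller than the lateral side length $|V_iV_{i+1}|/(2\cos\beta)$, the triangle $\tri{\beta}{V_iV_{i+1}}$ coincides exactly with the angular wedge of width $\beta$ flush against the edge $V_iV_{i+1}$: the two sides issuing from $V_i$ are the base and a segment making angle $\beta$ with it, and the remaining (far) side stays outside such a disk. The same holds for $\tri{\beta}{V_{i-1}V_i}$ against the edge $V_{i-1}V_i$, while $\sector{\beta}{V_i}{r}$ is by construction the wedge of width $\beta$ centered on the bisector. These three wedges lie inside the interior angle $\theta_i$ and have total angular width $3\beta$; with the sector in the middle they are pairwise disjoint away from $V_i$ as soon as $3\beta<\theta_i$. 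I therefore require $\beta<\theta_{\min}/3$ together with $r$ smaller than every lateral side length, so that this clean wedge picture is valid at every vertex simultaneously and the only intersections among incident free-regions occur at the shared vertex.

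For the global analysis I invoke simplicity of $V$: any two non-adjacent closed edges are disjoint, any vertex is disjoint from every non-incident edge, and distinct vertices are distinct, so by compactness of $\partial V$ there is a single positive constant $d_{\min}>0$ bounding all such distances from below. Each triangle $\tri{\beta}{V_iV_{i+1}}$ is contained in the closed neighborhood of its edge of radius $h_i=(|V_iV_{i+1}|/2)\tan\beta$, and each sector lies in the disk of radius $r$ about its vertex. Shrinking $\beta$, which drives every $h_i$ to $0$, and shrinking $r$ below thresholds determined by $d_{\min}$, therefore forces every pair of non-adjacent free-regions to be disjoint; the same smallness, applied to the inward side of each edge and to the interior wedge at each vertex, simultaneously keeps every free-region inside $V$.

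Finally I choose $\beta$, and then $r$, small enough to meet all of these finitely many constraints at once. I expect the global step to be routine compactness. The main thing to get right is the local bookkeeping at each vertex, namely fixing the orientation of $\sector{\beta}{V_i}{r}$ along the bisector and tying $r$ to the triangle geometry so that, inside a small disk, the two incident edge-triangles and the vertex-sector genuinely appear as three disjoint angular wedges of total width $3\beta<\theta_i$.
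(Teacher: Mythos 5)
Your proof is correct, but it takes a genuinely different route from the paper's. The paper first triangulates $V$ and reduces to the case of a single triangle $T=ABC$, where the angle trisectors at the three corners carve $T$ into six interior-disjoint regions: each edge-triangle $\tri{\beta_T}{AB}$ sits between its edge and the two nearest trisectors, each vertex-sector sits in the middle trisected corner of radius at most the incircle tangent length, with explicit constants $\beta_T<\frac13\min\{\delta,\epsilon,\zeta\}$ and $r_T<\min\{s-a,s-b,s-c\}$; disjointness across different triangles of the triangulation is then automatic because the regions live inside interior-disjoint triangles, and one takes minima over the triangulation. You work on the polygon directly: at each vertex you run essentially the same angular-budget argument (two edge-wedges plus a bisector-aligned sector, total width $3\beta<\theta_i$, mirroring the paper's one-third-of-the-minimum-angle choice), but you then need a separate global step, extracting a uniform separation $d_{\min}>0$ between non-adjacent boundary features from simplicity and compactness and shrinking the triangle heights $\tfrac12|V_iV_{i+1}|\tan\beta$ and the radius $r$ below it. The paper's triangulation buys fully explicit thresholds and makes every potentially interacting pair of regions "adjacent" inside one triangle, so no global separation constant is needed (and containment in $V$ is free); your version avoids the triangulation machinery at the cost of a less explicit $d_{\min}$ and a slightly hand-waved containment-in-$V$ step. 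One point worth making explicit: for the two edge-triangles meeting at $V_i$, what you actually need is that each is \emph{globally} contained in its angular wedge at $V_i$ (immediate, since an isosceles triangle with base angle $\beta$ lies in the width-$\beta$ sector at either base vertex), not merely that it agrees with the wedge inside a small disk; that global containment, together with the disjointness of the angular intervals, is what makes incident pairs disjoint everywhere rather than just near $V_i$.
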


\ifabstract
\begin{wrapfigure}{r}{3in}
\else
\begin{figure}
\fi
  \centering
  \vspace{-2ex}
  \includegraphics[scale=.65]{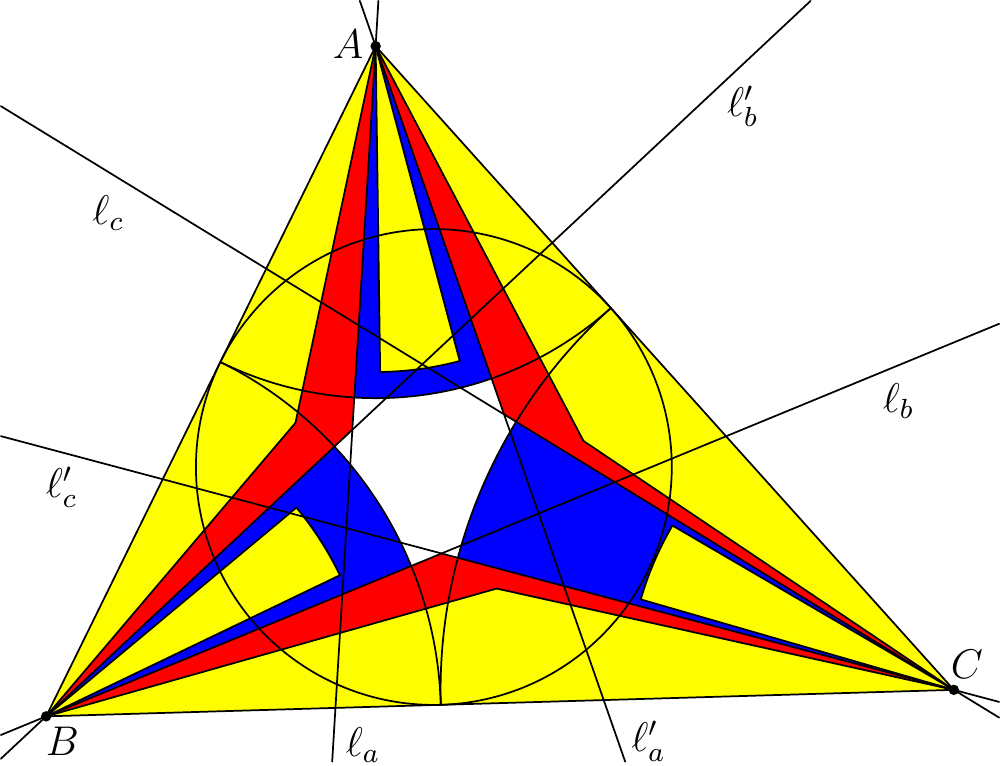}
  \caption{The free regions (lightly colored) in triangle $ABC$ are separated
    by the dark circles and the angle trisectors $\ell_a$, $\ell_a'$,
    etc.}
  \vspace{-3ex}
  \label{fig:freeregion}
\ifabstract
\end{wrapfigure}
\else
\end{figure}
\fi

\proof
  We first prove the result for triangles. For triangle $T=ABC$ with
  side lengths $a,b,c$, semiperimeter $s=\frac{1}{2}(a+b+c)$, and
  angles $\delta,\epsilon,\zeta$, choose $\beta_T <
  \frac{1}{3}\min\{\delta,\epsilon,\zeta\}$ and $r_T <
  \min\{s-a,s-b,s-c\}$. Then the triangles $\tri{\beta_T}{AB}$, etc.,
  and the sectors $\sector{\beta_T}{A}{r_T}$, etc., can be drawn in
  the triangle without overlap, as in Figure \ref{fig:freeregion}:
  Indeed, $\tri{\beta_T}{AB}$ is contained between $AB$ and the two
  trisectors $\ell_a$ and $\ell_b'$ (the region shown in red), sector
  $\sector{\beta_T}{A}{r_T}$ is contained in the sector
  $\sector{\frac{\delta}{3}}{A}{s-a}$ between trisectors $\ell_a$ and
  $\ell_a'$ (shown in green), etc., and these six regions are
  interior-disjoint.

  For the general case, first triangulate polygon $V=V_1\cdots V_n$ by
  $n-2$ diagonals. For each triangle $T=V_iV_jV_k$ in the
  triangulation, calculate $\beta_T$ and $r_T$ as above, and draw the
  free regions in $T$. Finally, as all resulting triangles and sectors
  are disjoint (except at vertices), choosing $\beta =
  \min_{T}\{\beta_T\}$ and $r=\min_T\{r_T\}$ suffices.
\endproof

For a sequence of positive lengths $\ell_1,\ldots,\ell_n$, we define
the \term{chain} $\chain{\alpha}{\ell_1,\ldots,\ell_n}=C$ to be the
hinged figure formed by hinging the $2n$ upward-pointing triangles
\begin{equation*}
  \tri{\alpha}{\ell_1},
  \tri{\alpha}{\ell_1}, \tri{\alpha}{\ell_2}, \tri{\alpha}{\ell_2}, \ldots,
  \tri{\alpha}{\ell_n}, \tri{\alpha}{\ell_n}
\end{equation*}
in order at their base vertices. The \term{initial point} $C_0$ and
\term{final point} $C_1$ of this chain are the unhinged vertices of
the first $\tri{\alpha}{s_1}$ and the last $\tri{\alpha}{s_n}$
respectively.


We may now state and prove the desired result of this section.


\begin{thm}\label{thm:move_subtree}
  For any two tree-like figures $F$ and $F'$ related by the rooted
  subtree movement of $(A,a)$ from $(B,b)$ to $(B,b')$, there exists a
  common refinement $G\prec F$ and $G\prec F'$.
  Further, if $a$ lies on link $L_a\in A$, and a simple path $\gamma$
  along $\partial B$ is chosen from $b$ to $b'$, this refinement
  $G\prec F$ may be chosen so that only $L_a$ and links incident with
  $\gamma$ are refined.
\end{thm}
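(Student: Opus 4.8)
The plan is to realize the subtree movement by hinged dissection using thin \term{chains} of isosceles triangles as ``re-routable'' connectors, drawing every extra piece from the free regions supplied by Lemma~\ref{lem:free-regions}. Intuitively, I would cut a chain out of the link $L_a$ at the point $a$, cut a chain out of $B$ along the path $\gamma$, splice them into a single hinged connector, and arrange that this connector has two flat states: one that places the root of $(A,a)$ at $b$, and one that places it at $b'$. The common refinement $G$ is then simply $F$ (equivalently $F'$, since $A$ and $B$ are shared) with these chains cut out and their internal hinges broken; the two flat states of the connector correspond to two different ways of gluing $G$ back up, one recovering $F$ and the other recovering $F'$, which is exactly what $G\prec F$ and $G\prec F'$ demand.

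In more detail, I would first apply Lemma~\ref{lem:free-regions} to $L_a$ and to each link of $B$ that $\gamma$ passes through, and set $\alpha$ to be the minimum of the resulting angles $\beta$ so that one chain angle works uniformly. From the free regions of $L_a$ near $a$ I cut a chain $\chain{\alpha}{\cdots}$ whose initial point $C_0$ is anchored to the remainder of $L_a$ at $a$ and whose final point $C_1$ is a free hinge. Walking along $\gamma$ from $b$ to $b'$, for each boundary portion traversed I cut the matching free-region triangles from the corresponding link and hinge them in order, obtaining a chain that runs just inside $\partial B$ from $b$ to $b'$; I then splice this chain to $C_1$. Every cut lies inside a free region, so by Lemma~\ref{lem:free-regions} all cut pieces are interior-disjoint from one another and from the undisturbed parts of the links. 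Since cuts are made only in $L_a$ and in links that $\gamma$ meets, this is exactly the locality asserted in the second part of the statement.

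It then remains to exhibit the two gluings. To recover $F$, I fold each chain flat back into the free-region notch it was cut from; this reconstitutes every affected link of $A$ and $B$ as a single piece and leaves one surviving connecting hinge, which now sits at the coincident points $a=b$, so gluing accordingly yields $F$. To recover $F'$, I deploy the connector along $\gamma$: its base triangles lie flat along the boundary portions from $b$ to $b'$, the affected link interiors are again completely covered, and the free end is carried to $b'$, where $(A,a)$ attaches; gluing this configuration yields $F'$. Because $\alpha$ is tiny and every triangle remains inside its own free region in both states, each gluing is realized by an embedding with disjoint interiors.

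The main obstacle is designing the connector so that a single hinged strip genuinely admits both flat states while keeping $A$ and $B$ whole in each. This forces the base lengths of the chain triangles to match the lengths of the boundary portions of $\gamma$, and it is precisely why the chain pairs each triangle $\tri{\alpha}{\ell_i}$ with an identical copy: the doubled triangles let the same material present the same flat silhouette along $\gamma$ while shifting the free endpoint from $b$ to $b'$, so that no notch is left uncovered in either state. Verifying that the thin triangles never collide — neither with each other nor with the host links — throughout both flat states is the crux, and it is exactly what the free-region construction of Lemma~\ref{lem:free-regions} is built to guarantee.
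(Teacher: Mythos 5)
Your high-level plan (cut thin isosceles-triangle chains out of the free regions of $L_a$ at $a$ and of the links along $\gamma$, and exhibit two gluings) matches the paper's, but the topology of your connector is wrong, and this is a genuine gap rather than a presentational one. You splice the chain cut at $a$ and the chain cut along $\gamma$ end-to-end into a \emph{single} connector anchored to $L_a$ at $a$ (and, implicitly, to $B$ at the far end). In each of the two target configurations, \emph{both} cavities --- the notch cut out of $L_a$ at $a$ \emph{and} the triangular notches along $\gamma$ --- must be filled simultaneously, each by a chain whose two endpoints land at prescribed points. Trace the endpoints in the $F'$ configuration of your serial arrangement $L_a \!-\! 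D \!-\! C_\gamma \!-\! B$: if $D$ stretches along $\gamma$ to carry $a$ to $b'$, its junction with $C_\gamma$ sits at $b$, but $C_\gamma$ must then fold into the compact notch of $L_a$, which now sits at $b'$, while its other end is hinged to $B$ at $b'$ --- so $C_\gamma$ would need one endpoint at $b$ and both its body and other endpoint localized at $b'$, which is impossible. The paper avoids this by keeping the two chains \emph{separate}: $D$ is hinged between $A^*$ (at $a$) and $B^*$ (at $b$), while $C$ hangs off $B^*$ at $b'$ with a genuinely free end, i.e.\ $G=\join{G_a}{D_0}{G_b}{b}$ with $G_a=\join{A^*}{a}{D}{d_1}$ and $G_b=\join{B^*}{b'}{C}{C_1}$. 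Then in $F$ each chain fills its own hole, and in $F'$ the two congruent chains \emph{swap roles}; your single-strip picture cannot realize this swap.

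Two further points. First, the material removed at $a$ cannot be an arbitrary chain ``from the free regions of $L_a$ near $a$'': it must be cut as a \emph{kite-sweep} of $t$ abutting rhombi around the single point $a$, so that the resulting hinged chain is congruent to $\chain{\beta}{\ell_1,\ldots,\ell_t}$ (otherwise the swap fails) and so that it refolds into a compact fan at whichever point it ends up. Second, your choice of a uniform angle $\alpha$ equal to the minimum free-region angle is too large: folding $2t$ triangles of apex half-angle $\alpha$ around the point $a$ sweeps a total angle of $2t\alpha$, which vastly exceeds the free sector there. The paper takes $\beta<\alpha/(2t)$ precisely so the kite-sweep fits, and also cuts and rehinges a small corner of $L_a$ to keep that link a simple polygon. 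Your locality claim (only $L_a$ and links meeting $\gamma$ are refined) and your use of Lemma~\ref{lem:free-regions} for disjointness are fine, but the construction as written does not produce a figure admitting both configurations.
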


Note first that both $A$ and $B$ are tree-like, as they are subtrees of
tree-like figure $F$.  Note also that there are exactly two boundary
paths $\gamma$ from $b$ to $b'$ since $B$ is tree-like.

\begin{figure}
  \centering
  \subfloat[][]{\includegraphics[scale=.65]{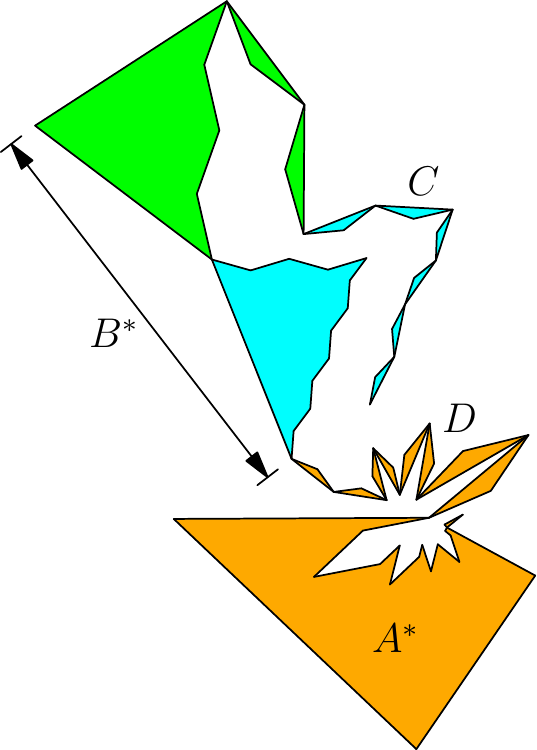}}\hfil
  \subfloat[][]{\includegraphics[scale=.65]{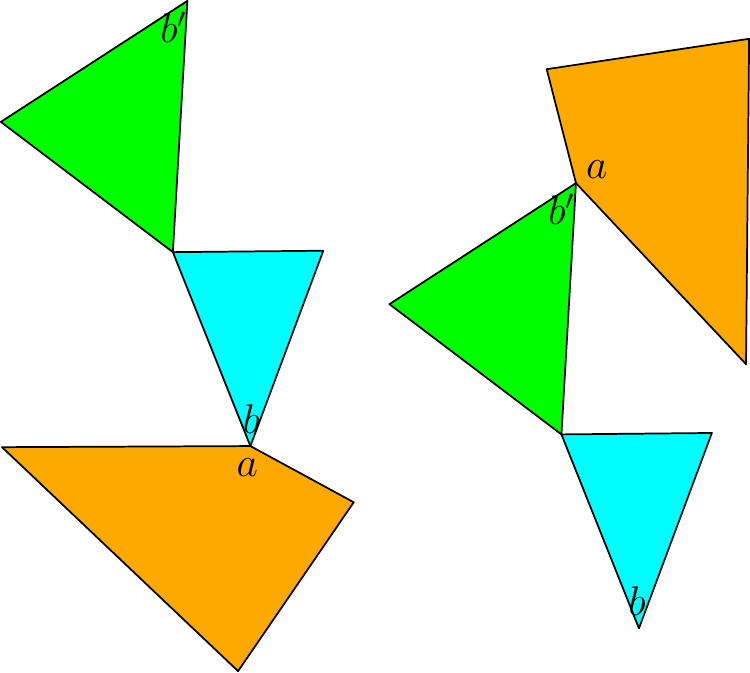}}\hfil
  \subfloat[][]{\includegraphics[scale=.65]{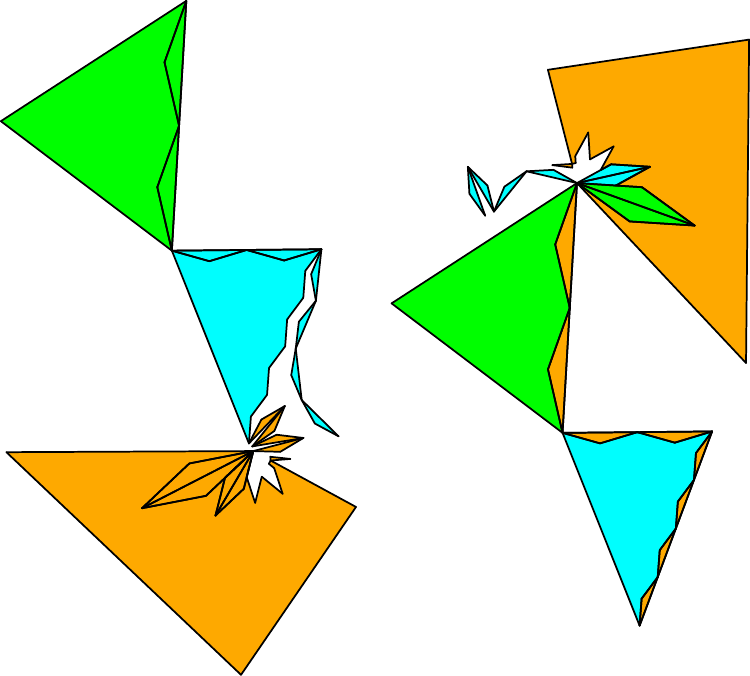}}
  \caption{Effectively moving a rooted subtree}
  \label{fig:movefixed}
\end{figure}

\medskip

\begin{proof}
  Without loss of generality, the diagram is oriented so that $\gamma$
  traces the boundary of $B$ counterclockwise from $b$. The
  construction is in two steps.

  In the first step, we cut a chain from the boundary of $\gamma$, as
  follows.  Let $r$ be the smallest free-region radius for all links
  touched by $\gamma$, and likewise let $\alpha$ be the smallest
  free-region angle.
  Path $\gamma$ is a polygonal path $P_0P_1\ldots P_t$ along the
  boundary of $B$, where $P_i$ are vertices of links with $P_0 = b$
  and $P_t = b'$. By refining this path into shorter segments as
  necessary, we may assume that each segment $P_{i-1}P_{i}$ has length
  $2\ell_i$ with $\ell_i\le r$.



  Choose an angle $\beta < \alpha/2t$. Next, cut out $2t$ isosceles
  triangles along $\gamma$: for each segment $P_iP_{i+1}\in\gamma$,
  cut two $\tri{\beta}{\ell_i}$ triangles. These triangles fit in the
  appropriate free-triangle for their link in $B$ by choice of
  $\beta$, so all of these triangles may indeed be removed without
  overlapping or disconnecting any of $B$'s links. Let $B^{*}$ be the
  hinged figure after these triangles have been removed, and let
  $C=\chain{\beta}{\ell_1,\ldots,\ell_t}$ be the chain formed by
  hinging these $2t$ cut-out triangles in order.  Finally, rehinge the
  pieces to form the figure $G_b = \join{B^{*}}{b'}{C}{C_1}$. See
  Figure \ref{fig:movefixed} for an illustration.

  The other step is to cut a chain away from $A$ at $a$. Draw $t$
  abutting rhombi $r_1,\ldots,r_t$ in link $L_a$ at point $a$ so that
  $r_i$ has a diagonal of length $\ell_i$ and an angle of $2\beta$;
  they are drawn in the order $r_1,\ldots,r_k$ clockwise around $a$ so
  that $r_i$ shares (part of) an edge with $r_{i+1}$ for $1\le i\le
  k-1$. Call this configuration of kites a \term{kite-sweep}
  $\kitesweep{\beta}{\ell_1,\ldots,\ell_n}$.
  Recall that $\beta$ was chosen so that $2t\beta < \alpha$ and that
  $\ell_i$ were chosen so that $\ell_i < r$ for each $1\le i\le r$, so
  this kite-sweep can fit within the free-sector of $L_a$ at $a$.
  Finally, cut out these $t$ rhombi in the form of $2t$
  $\beta$-triangles, rehinging them into
  $D=\chain{\beta}{\ell_1,\ldots,\ell_t}$.  Link $L_a$ is no longer a
  simple polygon, so simply cut away a small corner near $a$ and
  rehinge it as shown in Figure \ref{fig:movefixed}.  Let $A^*$ be the
  remaining hinged figure after $A$ has been thus mutilated.
  Finally, hinge all of $A$ back together in the form $G_a =
  \join{A^{*}}{a}{D}{d_1}$.

  The final result of our construction is the single hinged figure $G
  = \join{G_a}{D_0}{G_b}{b}$; I claim $G\prec F$ and $G\prec F'$. To
  see $G\prec F$, simply configure chains $C$ and $D$ so that each
  link assumes the spot from which it was cut from $F$; i.e., chain
  $C$ fills the triangular holes left along path $\gamma$, and chain
  $D$ fills the kite-sweep in $L_a$. See Figure
  \ref{fig:movefixed}(c), left.
  For the refinement $G\prec F'$, the chains simply switch roles:
  chain $D$ now fills in the gaps left along $\gamma$, and chain $C$
  fills the kite holes in $L_a$. See Figure \ref{fig:movefixed}(c),
  right.
%
%
%
\end{proof}


\subsection{Moving Rooted Pseudosubtrees}

Now we increase the level of abstraction by allowing movement of
rooted subtrees in a hinged figure $F$ that already has a refinement
$G\prec F$.  We call $F$ the \term{pseudofigure} of $G$, and subtrees
of $F$ \term{pseudosubtrees} of $G$.

\begin{thm}\label{thm:move_pseudosubtree}
  Take tree-like figures $F$ and $F'$ related by the rooted-subtree
  movement of $(A,a)$ from $(B,b)$ to $(B,b')$ as in
  Theorem~\ref{thm:move_subtree}, and suppose $G\prec F$.  Then there exists
  a common refinement $H\prec G\prec F$ and $H\prec F'$. Further, if a
  path $\gamma$ from $b$ to $b'$ on $\partial B$ is chosen, then only
  links of $G$ incident with $\gamma$ are refined.
\end{thm}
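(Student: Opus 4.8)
The plan is to re-run the construction of Theorem~\ref{thm:move_subtree} one level deeper, making every cut inside the links of the given refinement $G$ rather than inside the links of $F$. First I would note that since $G\prec F$ is obtained by gluing portions of $G$'s boundary, we have $\partial F\subseteq\partial G$; hence the chosen path $\gamma\subseteq\partial B\subseteq\partial F$ also runs along $\partial G$, and the attachment point $a$ lies on the boundary of a single link $L_a^{G}$ of $G$ (the piece of $L_a$ meeting the $a$--$b$ hinge on the $A$-side). The links of $G$ incident with $\gamma$, together with $L_a^{G}$, are simple polygons, so Lemma~\ref{lem:free-regions} supplies free-region triangles and sectors for each of them. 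This is exactly why the free-region lemma was stated for an arbitrary simple polygon: it applies verbatim to the finer pieces of $G$.

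Next I would cut the two chains as before, but within $G$'s pieces. I would first subdivide $\gamma=P_0P_1\cdots P_t$ at every vertex of a link of $G$ lying on it, so that each segment runs along a single edge of a single $G$-link, and then subdivide further so that $|P_{i-1}P_i|=2\ell_i\le 2r$, where $r$ and $\alpha$ are now the smallest free-region radius and angle over the $G$-links touched by $\gamma$. Choosing $\beta<\alpha/2t$, I would cut two $\tri{\beta}{\ell_i}$ triangles from the $G$-link bordering the $i$th segment and hinge them into $C=\chain{\beta}{\ell_1,\ldots,\ell_t}$, and cut a congruent chain $D$ from a kite-sweep $\kitesweep{\beta}{\ell_1,\ldots,\ell_t}$ drawn in the free-sector of $L_a^{G}$ at $a$. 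Writing $A^{*}$ and $B^{*}$ for the refined $G$-pieces with these triangles removed, I would assemble $H=\join{H_a}{D_0}{H_b}{b}$ with $H_a=\join{A^{*}}{a}{D}{d_1}$ and $H_b=\join{B^{*}}{b'}{C}{C_1}$, exactly paralleling the previous proof.

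It remains to verify the two refinement relations. For $H\prec G$ I would configure each chain to fill the very holes it was cut from; since every triangle came from an individual $G$-link, refilling them restores precisely the links of $G$ (left unglued), and chain $D$ keeps $A^{*}$ attached at $a=b$, so this recovers $G$, whence $H\prec F$ by transitivity of $\prec$. For $H\prec F'$ I would swap the roles of the chains, letting $D$ fill the holes along $\gamma$ and $C$ fill the kite holes, which reattaches the subtree $A$ at $b'$; then the filled $A^{*}$ and $B^{*}$ equal the refined $A$ and $B$ of $G$, and applying the gluings witnessing $G\prec F$---all of which act internally within $A$ and within $B$---collapses them to the coarse links of $A$ and $B$. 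Since $F$ and $F'$ share the same links and differ only in this attachment point, the result is exactly $F'$. The localization claim is then immediate: the only links of $G$ that were cut are those incident with $\gamma$, together with $L_a^{G}$.

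The main obstacle is the bookkeeping that carries the free-region machinery down from $F$ to $G$: I must subdivide $\gamma$ finely enough that each segment sits inside a single link of $G$---not merely a single link of $F$---and I must ensure the kite-sweep at $a$ fits inside the lone $G$-link $L_a^{G}$. The pieces of $G$ are otherwise geometrically uncontrolled, but Lemma~\ref{lem:free-regions} guarantees usable free-regions for every simple polygon, so no new geometric difficulty appears once $\gamma$ is segmented compatibly with $G$. The one genuinely new feature beyond Theorem~\ref{thm:move_subtree} is that the same chain-swap must simultaneously witness a refinement of the finer figure $G$ and of the coarse figure $F'$; this succeeds because filling the original holes leaves $G$'s pieces separated, while the internal $G\prec F$ gluings collapse those same pieces into $F'$ once the attachment has been moved.
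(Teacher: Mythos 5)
Your high-level plan---rerun the construction of Theorem~\ref{thm:move_subtree} with all cuts made inside the links of $G$---is the right starting point, but it skips over the two phenomena that make this theorem genuinely harder than Theorem~\ref{thm:move_subtree}, and your construction fails on both. First, the path $\gamma$ is a path on the \emph{pseudo}boundary $\partial B\subseteq\partial F$, and under the gluing witnessing $G\prec F$ it lifts not to a single arc of $\partial G$ but to several disjoint arcs: at a seam the path jumps from a boundary point $Q_i$ of one $G$-link to a distinct, merely collocated boundary point $P_{i+1}$ of another $G$-link (identified only in $F$). Your single chain $C=\chain{\beta}{\ell_1,\ldots,\ell_t}$ must hinge the triangle cut at $Q_i$ directly to the triangle cut at $P_{i+1}$; when you fold $C$ back into its holes and glue, that hinge becomes a hinge between two $G$-links that are not hinged there in $G$, so the resulting figure is not $G$ and $H\prec G$ fails (and if you instead break the chain at the seams, the fragments are no longer attached at $b'$ and cannot migrate to the kite-sweep at $a$ in the $F'$ configuration). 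The paper's proof resolves this with a relay: one chain $C_j$ per contiguous arc, attached at that arc's endpoint, padded by an extra kite-sweep $\kitesweep{\beta}{\ell_1,\ldots,\ell_{i_j-1}}$ cut at the arc's starting point $P_{i_j}$ so that in the alternate configuration each $C_j$ fills the kite-sweep at the start of the \emph{next} arc and only the last chain plays the role of $C$ at $b'$.

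Second, your assertion that ``$a$ lies on the boundary of a single link $L_a^{G}$ of $G$'' is false in general: the pseudohinge $h$ identifying $a$ and $b$ typically corresponds to $n>1$ hinges $h_1,\ldots,h_n$ of $G$, each joining a link $L_i^a$ on the $A$-side to a link $L_i^b$ on the $B$-side. Cutting a kite-sweep out of just one of these links does not detach $A$ from $B$: the remaining $n-1$ hinges still pin the $A$-side to the point $b$, so no configuration of your $H$ can place $A$ at $b'$, and $H\not\prec F'$. This is exactly why the paper devotes the second half of its proof to the graded kite-sweeps of angle $i\beta/n$, the chains $D_i^a$ that fill the neighboring link's kite-sweep, and the kite-chains $E_i^a$ that nest to form a single effective $\chain{\beta}{\ell_1,\ldots,\ell_k}$ spanning all $n$ connections simultaneously. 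Without mechanisms for both the disconnected pseudo-path and the multi-hinge pseudohinge, the proposal does not prove the theorem.
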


In other words, this theorem allows the movement of a pseudo-subtree
of $G$. The construction below directly generalizes the method used in
Theorem~\ref{thm:move_subtree}.

\begin{figure}
  \centering
  \subfloat[][]{\includegraphics[scale=.65]{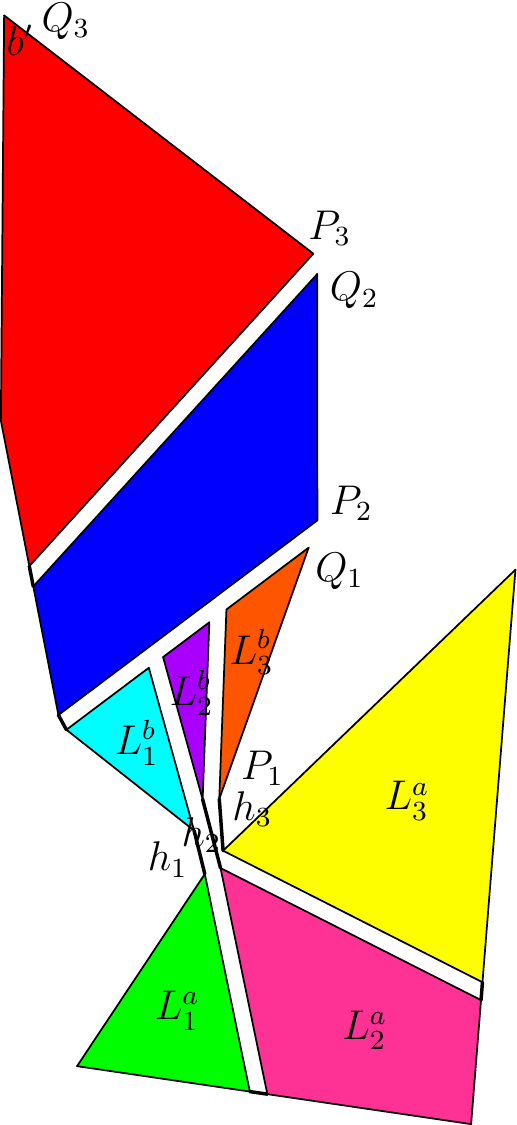}}
  \subfloat[][]{\includegraphics[scale=.65]{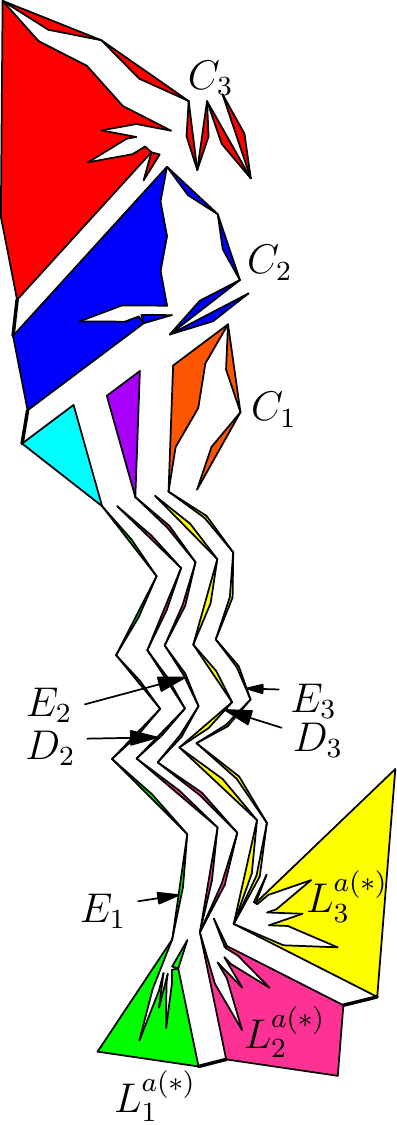}}
  \subfloat[][]{\includegraphics[scale=.65]{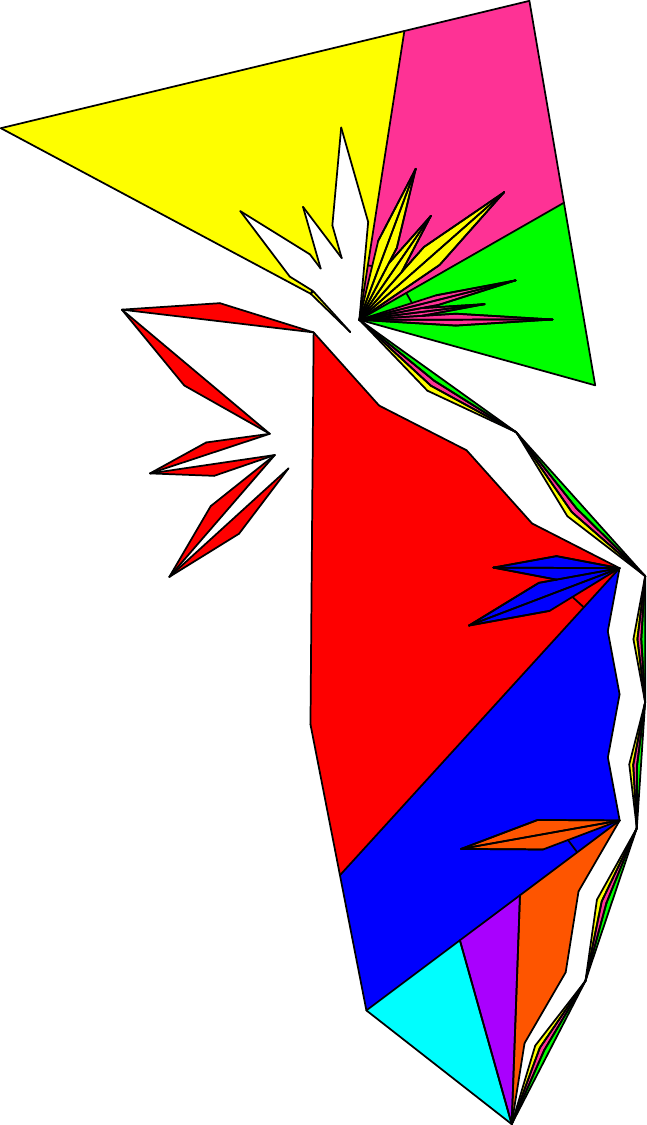}}
  \caption{Moving a rooted pseudosubtree.}
  \label{fig:movepseudodisconnected}
\end{figure}

\medskip

\begin{proof}
  We will write from $G$'s point of view, so features of $F$ will have
  the \emph{pseudo} prefix.
  Without loss of generality, suppose $\gamma$ winds counterclockwise
  around the pseudoboundary of $B$.

  Consider the behavior near the pseudohinge $h$ of $F$ corresponding
  to points $a$ and $b$; let $\{h_i\mid 1\le i\le n\}$ be the set of
  all the hinges of $G$ with the property that $h_i$ has two incident
  links $L_i^a$ and $L_i^b$ lying in $A$ and $B$ respectively. As
  links are defined to be simple polygons, these $2n$ links are
  distinct.
  We may suppose that these $2n$ links are the \emph{only} links of
  $G$ incident with pseudo-hinge $h$: the construction below is
  unchanged by the presence of more, extraneous links.
  Without loss of generality, we may assume that these links have been
  numbered so that they fall in the cyclic order
  $L_1^a,\ldots,L_{n-1}^a,L_n^a,L_n^b,L_{n-1}^b,\ldots,L_1^b$
  counterclockwise around $h$.

  Our goal is to mimic the two steps in the proof of
  Theorem~\ref{thm:move_subtree}, by effectively cutting a chain from $A$ at
  $a$ and cutting a chain from $B$ along $\gamma$.

  We begin by choosing the dimensions of the chain. First, the
  refinement $G\prec F$ induces an identification of some boundary
  points of $G$, and any point $p\in\partial G$ collocated with a
  vertex of any link in $G$ will itself be declared a (possibly flat)
  vertex of its link. We also declare $b'$ to be a vertex of its link,
  if it isn't already.
  Let $r$ and $\alpha$ be the smallest free-region radius and angle
  for any link in $G$ incident with $\gamma$.  Polygonal path $\gamma$
  consists of $t$ segments $P_iQ_i$, ($1\le i\le t$) from the boundary
  of $G$, where $P_1$ corresponds to $b$ and $Q_t$ corresponds to
  $b'$; as before, we may subdivide $\gamma$ as necessary so that
  $|P_iQ_i| = 2\ell_i \le 2r$ for each $1\le i\le n$.  We choose
  $\beta = \alpha/2t$.

  Note that boundary point $Q_i$ of $G$ does not necessarily equal
  $P_{i+1}$, but if they are unequal then both $Q_i$ and $P_{i+1}$ are
  vertices of their respective links; let $i_0,\ldots,i_s$ be the
  indices where $Q_{i_j-1}\neq P_{i_j}$, with $i_0 = 1$ and $i_s =
  t+1$.
%

  We begin by refining the links along $\gamma$ to imitate the first
  step in the construction of Theorem~\ref{thm:move_subtree}, i.e. to
  simulate cutting a $\chain{\beta}{\ell_1,\ldots,\ell_t}$ from
  $\gamma$ and linking it onto $b'$. We treat each portion
  $P_{i_j+1}Q_{i_{j+1}}$ of $\gamma$, corresponding to a contiguous
  path along $\partial G$, separately.
  For each $i_j\le k\le i_{j+1}-1$, cut two $\tri{\beta}{\ell_k}$
  triangles inward along $P_kQ_k$. Also, cut a kite-sweep
  $\kitesweep{\beta}{\ell_1,\ldots,\ell_{i_j-1}}$ from the free-sector
  at $P_{i_j}$, and then make the link simple by removing and
  rehinging a small corner as shown in Figure
  \ref{fig:movepseudodisconnected}(b).  Notice that, since $\beta$ is less
  than the free-region angle along each edge incident with $\gamma$,
  all of the removed isosceles triangles fit within this
  region. Likewise, the kite-sweep has total angle $2\beta\cdot
  (i_{j}-1)\le 2t\beta=\alpha$ and the largest kite has diagonal
  $\max\{\ell_1,\ldots,\ell_{i_j-1}\}\le r$, so the kite-sweep fits in
  the free-sector at $P_{i_j}$.
  We have now removed $2(i_{j+1}-1)$ triangles, namely two of each
  length $\ell_1,\ldots,\ell_{i_{j+1}-1}$, which we now rehinge into a
  chain $C_{j}=\chain{\beta}{\ell_1,\ldots,\ell_{i_{j+1}-1}}$ and
  attach to $G$ by hinging $C_j$'s final point to $Q_{i_{j+1}-1}$.

  To see that this construction refines $G$, note that each of the $s$
  chains may simply fill in the places from which they were cut. To
  see that this hinged figure can also serve the purpose that
  $\join{B^{*}}{b'}{C}{c_0}$ serves in Theorem~\ref{thm:move_subtree},
  note that each chain $C_j$ may fill in the kite-sweep cut at
  $P_{i_{j+1}}$ for $1\le j\le s-1$, while $C_j$ is the desired chain
  attached at $b'$ (Figure \ref{fig:movepseudodisconnected}(c)).

  Now we show how to refine $G$ around pseudohinge $h$. For each $1\le
  i\le n$, cut a $\kitesweep{i\beta/n}{\ell_1,\ldots,\ell_k}$
  kite-sweep in $L_i^a$ at $h_i$; the resulting non-simple link has
  two corners at $h_i$, so we cut off and rehinge the more
  counterclockwise of the two, calling the resulting link (without this
  small corner) $L_i^{a(*)}$.
  As before, by choice of $r$ and $\beta$, the $i$th kite-sweep can
  fit within the free-sector of $L_i^a$ at $h_i$. For $2\le i\le n$,
  cut each of the $2t$ triangles $\tri{i\beta/n}{\ell_j}$ removed from
  $L_i^a$ into two pieces: a triangle $\tri{(i-1)\beta/n}{\ell_j}$
  with the same base, and a kite whose four angle measures are
  $\beta/n$, $180^\circ+2(i-1)\beta/n$, $\beta/n$, and
  $180-2i\beta/n$.  The $(i-1)\beta/n$ triangles are hinged into a
  chain $D_i^a = \chain{(i-1)\beta/n}{\ell_1,\ldots,\ell_{2k}}$, and
  the kites are hinged into a \term{kite-chain} $E_i^a =
  \chain{(i-1)\beta/n,i\beta/n}{\ell_1,\ldots,\ell_{2k}}$ as in Figure
  \ref{fig:movepseudodisconnected}; for $i=1$, the $\beta/n$ triangles are hinged into the
  kite-chain $E_1^a=\chain{0,i\beta/n}{\ell_1,\ldots,\ell_{2k}}$). We
  then hinge $(D_i^a)_{0}$ and $(E_i^a)_{0}$ to point $h_i$ of
  $L_i^{a(*)}$, and hinge point $h_b$ of $L_i^b$ to $E_i^a(1)$. See
  Figure \ref{fig:movepseudodisconnected}.

  As before, this is a refinement of $G$ since each piece may take its
  original position. We now describe the alternate configuration: For
  $2\le i\le n$, chain $D_i^a$ fills in the kite-sweep of $L_{i-1}^a$,
  while $L_n^a$'s kite-sweep remains unfilled. The kite-chains
  $E_1,\ldots,E_n$ fit together to form a refinement of a chain
  $\chain{\beta}{\ell_1,\ldots,\ell_k}$ connecting the two
  halves. This is exactly the desired form, so we're done.
\end{proof}

\subsection{Putting the Pieces Together}

Now we can finally write down the proof of the desired claim for this
section:

\begin{thm}\label{thm:common-refinement}
  For any finite collection of polygons $P_1,\ldots,P_n$ of equal
  area, there exists a common refinement $C\prec P_i$ for $1\le i\le
  n$.
\end{thm}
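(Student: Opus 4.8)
The plan is to reduce the theorem to repeated application of the two movement results already established, starting from an ordinary (unhinged) common dissection. First I would invoke the Bolyai--Gerwien theorem \cite{Lowry-1814} pairwise and overlay the resulting dissections to obtain a single finite set $S$ of polygonal pieces such that each $P_i$ is tiled by congruent copies of the pieces of $S$. For each $i$, I would then hinge these pieces together along a spanning tree of the adjacency structure of the tiling of $P_i$, hinging each pair of adjacent pieces at a shared boundary vertex. This yields a tree-like hinged figure $T_i$ whose tiling configuration reconstructs exactly $P_i$, so that $T_i\prec P_i$. Crucially, all of the $T_i$ have the \emph{same} multiset of links, namely the congruent copies of the pieces of $S$; only their hinge structures, and hence their assembled shapes, differ. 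By transitivity of refinement it now suffices to produce a single hinged figure $C$ with $C\prec T_i$ for every $i$, since then $C\prec T_i\prec P_i$ gives $C\prec P_i$.

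Next I would connect the figures $T_1,\ldots,T_n$ by a finite sequence of rooted subtree movements. The key combinatorial observation is that any two tree-like figures built from the same multiset of links are related by finitely many rooted subtree movements: one can detach leaf links one at a time and reattach them so as to reduce any such figure to a fixed canonical chain, and each such detach/reattach is a rooted subtree movement (taking the single leaf link, or the remainder, as the moved subtree). Running this process to the canonical chain and then backwards connects any two tree structures. Concatenating such sequences for $T_1\rightsquigarrow T_2$, $T_2\rightsquigarrow T_3$, and so on, I obtain one long chain of tree-like figures $F_0=T_1,\,F_1,\ldots,\,F_m=T_n$ in which each consecutive pair $F_j,F_{j+1}$ is related by a single rooted subtree movement and in which every $T_i$ appears as some $F_k$. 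The intermediate figures $F_j$ need not assemble into any nice polygon; this is irrelevant, since refinement is a purely combinatorial relation on boundary gluings.

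Finally I would build the common refinement by carrying a single refinement along this entire chain. Applying Theorem~\ref{thm:move_subtree} to the first movement $F_0\to F_1$ produces a common refinement $H_1$ with $H_1\prec F_0$ and $H_1\prec F_1$. Inductively, suppose $H_j\prec F_0,\ldots,F_j$, and in particular $H_j\prec F_j$. Treating $F_j$ as the pseudofigure of $H_j$ and applying Theorem~\ref{thm:move_pseudosubtree} to the movement $F_j\to F_{j+1}$ with $G=H_j$, I obtain $H_{j+1}\prec H_j$ together with $H_{j+1}\prec F_{j+1}$; transitivity of refinement then upgrades this to $H_{j+1}\prec F_0,\ldots,F_{j+1}$. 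After $m$ steps the figure $C:=H_m$ refines every $F_j$, hence in particular every $T_i$, and therefore $C\prec P_i$ for all $i$, as required.

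The substance of the argument lives entirely in Theorems~\ref{thm:move_subtree} and~\ref{thm:move_pseudosubtree}, which are already in hand; the main obstacle in this final step is the bookkeeping needed to preserve the refinement relation through an entire sequence of rearrangements. The essential point is that a naive iteration of Theorem~\ref{thm:move_subtree} would fail, because after one movement we hold a refinement $H_1\prec F_1$ rather than the figure $F_1$ itself, while the next movement acts on $F_1$. This is precisely the difficulty that Theorem~\ref{thm:move_pseudosubtree} is designed to overcome: it moves a subtree of the pseudofigure $F_j$ while refining the already-refined figure $H_j$, so that, combined with transitivity, the single figure $H_m$ accumulates compatibility with all of $P_1,\ldots,P_n$ simultaneously.
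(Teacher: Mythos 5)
Your proposal is correct and follows essentially the same route as the paper: start from a Bolyai--Gerwien common dissection, hinge the common pieces into tree-like figures, observe that any two tree-like hingings of the same links are connected by finitely many rooted subtree movements, and carry a single refinement through the entire sequence via Theorem~\ref{thm:move_pseudosubtree} and transitivity of refinement. The only cosmetic difference is that the paper rearranges one fixed base figure $A$ into each target hinging by placing links one at a time (two movements per link, alternately taking the link and its complement as the moved subtree), whereas you chain $T_1\rightsquigarrow\cdots\rightsquigarrow T_n$ through a canonical chain; both reorderings rest on the same two-movement trick.
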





\medskip

\ifabstract
  \begin{wrapfigure}[7]{r}{3in}
    \centering
    \vspace*{-6ex}
    \subfloat[][]{\includegraphics[scale=.65]{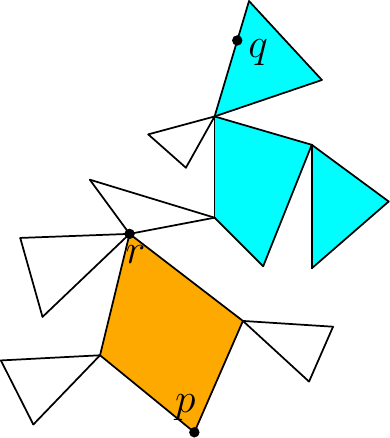}}
    \subfloat[][]{\includegraphics[scale=.65]{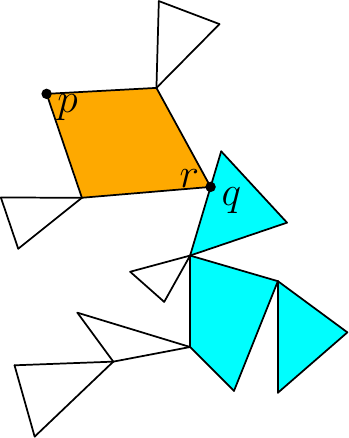}}
    \subfloat[][]{\includegraphics[scale=.65]{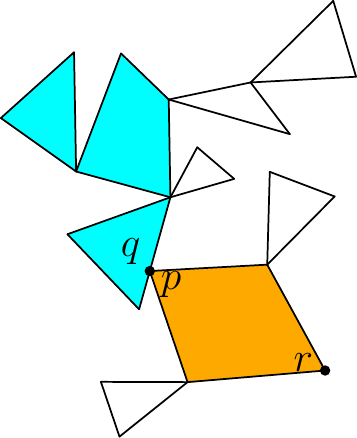}}
    \caption{Rearranging a pseudo-figure by means of rooted subtree
      movements.}
    \label{fig:pseudotwostep}
  \end{wrapfigure}

\proof[sketch]
  By the Lowry-Wallace-Bolyai-Gerwien Theorem,
  there exists a common
  decomposition of $P_1,\ldots,P_n$ into finitely many polygons
  $\{L_i\mid 1\le i\le k\}$. Hinging these together, we may 
  inductively apply the subtree movement constructions defined in
  this section as indicated in Figure~\ref{fig:pseudotwostep} to obtain a full hinged dissection. For full
  details, see the Appendix.
\endproof
\else
\begin{proof}
  By the Lowry-Wallace-Bolyai-Gerwien Theorem stated in the introduction,
  there exists a common
  decomposition of $P_1,\ldots,P_n$ into finitely many polygons
  $\{L_i\mid 1\le i\le k\}$; hinge these links together to form a
  tree-like hinged figure $A$. Now suppose we have a tree-like
  refinement $B_{t-1}$ that simultaneously refines $A$ and
  $P_1,\ldots,P_{t-1}$; we'll find a refinement $B_{t}\prec B_{t-1}$
  that is also a refinement of $P_{t}$ (the base case $t=1$ is
  realized by $A$ itself).

  Let $A_t$ be a tree-like hinging of links $\{L_i\}$ that refines
  $P_t$.  Since $B_{t-1}$ refines $A$, it suffices by repeated
  application of Theorem~\ref{thm:move_pseudosubtree} to show that
  $A_t$ may be obtained from $A$ by finitely many rooted subtree
  movements: $B_t$ is formed from $B_{t-1}$ by performing the
  corresponding refinements of $B_{t-1}$ sequentially according to
  Theorem~\ref{thm:move_pseudosubtree}.

  First, re-index the links $L_i$ so that for each $1\le m\le k$, the
  subfigure of $A_t$ formed by links $L_1,\ldots,L_m$---denoted
  $A_t|_{L_1,\ldots,L_m}$---is connected.  We rearrange $A$
  inductively.  Suppose that $A$ has been rearranged by rooted subtree
  movements into a tree-like figure $A_t^m$ so that the subfigure
  $A_t^m|_{L_1,\ldots,L_m}$ of $A_t^m$ formed by links
  $L_1,\ldots,L_m$ is equivalent to $A_t|_{L_1,\ldots,L_m}$. (We start
  with the base case $A_t^1 = A$.)  We now move $L_{m+1}$ into place.

  Let $p$ and $q$ be the boundary points of $L_{m+1}$ and
  $A_t|_{L_1,\ldots,L_m}$, respectively, which are identified in
  $A_t$. Also let $r$ be the hinge of $L_{m+1}$ closest to
  $A_t^m|_{L_1,\ldots,L_m}$ in $A_t^m$; i.e., in the incidence graph,
  $r$ is the hinge whose removal would separate $L_{m+1}$ from all
  links $L_1,\ldots,L_k$ (this vertex exists since
  $A_t|_{L_1,\ldots,L_m}$ is connected).

  \begin{figure}
    \centering
    \subfloat[][]{\includegraphics[scale=.65]{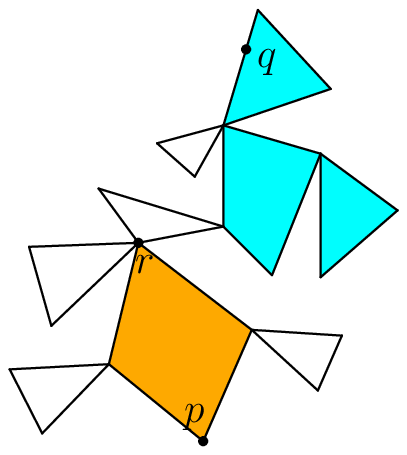}}
    \subfloat[][]{\includegraphics[scale=.65]{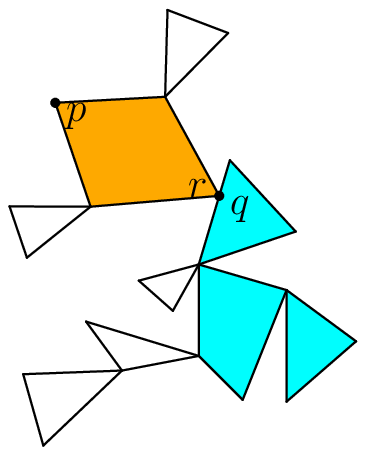}}
    \subfloat[][]{\includegraphics[scale=.65]{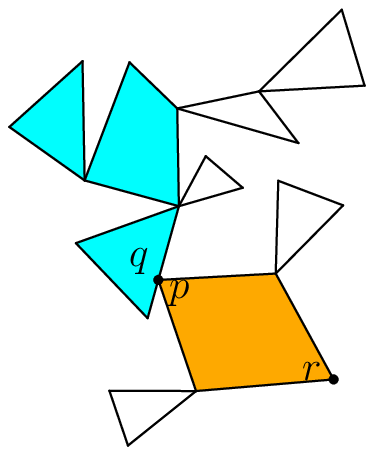}}
    \caption{Rearranging a pseudo-figure by means of rooted subtree
      movements.}
    \label{fig:pseudotwostep}
  \end{figure}

  We now perform two rooted subtree movements, as illustrated in
  Figure \ref{fig:pseudotwostep}. First, break $A_t^m$ into two rooted
  trees $(S,r)$ and $(T,r)$ at $r$, so that $L_{m+1}\in S$ and
  $L_1,\ldots,L_m\in T$; move $S$ and rejoin in the form
  $\join{S}{r}{T}{q}$. Next, break this into the same two rooted
  subtrees $(S,r)=(S,q)$ and $(T,q)$, and move $T$ to rejoin in the
  form $\join{T}{q}{S}{p}$. In this way, the links $L_1,\ldots,L_m$
  are not disturbed, and $L_{m+1}$ is rooted properly, i.e. this new
  hinged pseudofigure is $A_t^{m+1}$. Repeat this procedure to
  finally obtain $A_t^k = A_t$ by rooted subtree moves, as desired.
\end{proof}
\fi

\section{Continuous Motion}
\label{sec:motion}

Theorem~\ref{thm:common-refinement} constructs a hinged dissection
that has a configuration in the form of each of the $n$ polygons.
This section shows how to further refine that hinged dissection
to enable it to fold continuously into each polygon while avoiding
intersection among the pieces:

\begin{thm}\label{thm:motion}
  Any hinged figure $A$ has a refinement $B\prec A$ so that any two
  configurations of $B$ are reachable by a continuous
  non-self-intersecting motion.
\end{thm}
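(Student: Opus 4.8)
The plan is to reduce the motion problem to the unfolding theory for \term{chains of slender adornments} \cite{planarshapes}, which itself rests on the expansive motions produced by the Carpenter's Rule Theorem \cite{Connelly-Demaine-Rote-2003,Streinu-2005}. Concretely, I would refine $A$ into a hinged figure $B$ whose links are triangles, each hinged into the figure along an edge on which it is a \term{slender adornment}, and whose underlying bar-linkage is a single chain. For such a chain the slender-adornment theory supplies an expansive, non-self-intersecting motion from \emph{any} configuration to a canonical configuration: the straight configuration for an open chain, or the connected family of convex configurations for a closed chain. Since this canonical configuration (or family) is reachable from every configuration, two arbitrary configurations $c_1,c_2$ of $B$ are joined by first running the canonicalizing motion of $c_1$ and then running that of $c_2$ in reverse, yielding the desired continuous non-self-intersecting motion.

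The refinement has two ingredients. First, \emph{slenderness}: triangulate each link, and arrange (possibly after additional subdivision, e.g.\ by an altitude) that each resulting triangle is a slender adornment of the edge along which it is hinged into the chain; it suffices to recall that a triangle is slender on its longest edge, and the isosceles triangles produced by the chain construction $\chain{\alpha}{\ell_1,\ldots,\ell_n}$ already have base angles $\alpha<90^\circ$ and so are slender on their bases with no further cutting. Second, \emph{linearization}: I would use the boundary path of Section~\ref{sec:terminology}, which for a tree-like figure traverses every link edge as a single closed walk---an Euler tour of the incidence tree---and re-hinge the triangular pieces in boundary order so that the tree is unrolled into one chain. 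All such cuts merely add hinges along shared boundary edges, so the result satisfies $B\prec A$, and by transitivity of refinement the conclusion transfers to the hinged dissections built in Theorem~\ref{thm:common-refinement}. A general hinged figure, whose incidence graph may contain cycles, is first refined to a tree-like one by breaking each cycle at a hinge (itself a refinement, since it only ungvalues boundary), so restricting to the tree-like case is no loss of generality.

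With $B$ in hand, the proof is completed by invoking the chains-of-slender-adornments result: the underlying chain of bars admits an expansive canonicalizing motion \cite{Connelly-Demaine-Rote-2003,Streinu-2005}, and because each adornment is slender and the links start with disjoint interiors, this expansive motion never drives two adornments to overlap \cite{planarshapes}. Running this motion and its time-reverse through the canonical configuration connects any $c_1$ and $c_2$ without self-intersection, which is exactly the claim.

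I expect the main obstacle to be the \emph{linearization step}: organizing the cuts so that the pieces, read in boundary order, form a genuine chain of slender adornments while the construction remains a valid refinement and stays connected. The delicate points are that the boundary walk revisits each hinge several times, so a single link must be partitioned consistently among its visits; that the base edge chosen for each triangular piece must be precisely the edge it shares with its chain-neighbours, which is what forces the occasional subdivision to restore slenderness there; and that the interior diagonals introduced by triangulation must be promoted to legitimate hinge edges of the chain. Once this bookkeeping is settled, the slenderness of the individual pieces together with the expansiveness of the Carpenter's Rule motion do the remaining work essentially automatically.
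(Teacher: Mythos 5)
Your overall strategy---chainify into triangles, treat each triangle as a slender adornment on the bar joining its two hinge points, and connect two configurations through the straightened configuration via the Carpenter's Rule/slender-adornment machinery---is the same as the paper's. But there is a genuine gap at the step you dismiss as automatic: Theorem~\ref{thm:slender-adornments} applies only to a \emph{strictly simple} polygonal chain, i.e., one whose bars meet only at common endpoints. In your construction the bar of each triangular piece is an actual edge of that piece, lying on its boundary. In exactly the configurations we care about---the tightly packed configurations realizing the target polygons $P_i$---distinct links share whole edges (every triangulation diagonal, every glued boundary edge), so two bars coincide along a segment of positive length. The hypothesis of the unlocking theorem fails at the very starting configuration, and your appeal to ``disjoint interiors'' does not help, since the overlap happens on the links' boundaries where the bars live. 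This is precisely why the paper does not stop at the obtuse-triangle chain of Theorem~\ref{thm:chainify} but further refines each obtuse triangle into seven pieces (Figure~\ref{fig:7-piece}): that gadget relocates the bars into the interior of each triangle so that a bar meets the triangle's boundary only at its two hinge vertices, which forces strict simplicity in every configuration. Some analogous gadget is unavoidable, and your proposal contains no mechanism for it.

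A secondary error: a triangle is \emph{not} in general slender on its longest edge. Slenderness on a base $ab$ requires every inward normal from the two boundary arcs to meet $ab$, and for a triangular adornment this holds exactly when the angle opposite $ab$ is at least $90^\circ$; one can have the longest edge opposite an acute angle (e.g., a nearly right but acute apex pushed toward one endpoint), in which case the normal near the apex misses the base. Likewise your isosceles criterion should be base angles at most $45^\circ$, not less than $90^\circ$. This is why the paper subdivides each triangle at its \emph{incenter}: the three resulting triangles are automatically obtuse at the incenter, hence slender on the original edges, with no case analysis. Your ``subdivide by an altitude'' fallback does not repair this, since the right angle lands at the foot on the base rather than opposite it.
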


Indeed, given polygons $P_1,\ldots,P_n$ of equal area,
Theorem~\ref{thm:common-refinement} guarantees that there exists a hinged
figure $F$ that refines each of $P_1,\ldots,P_n$.
By Theorem~\ref{thm:motion}, there is a refinement $F'\prec F$ that is
universally reconfigurable without self-intersection. In particular,
$F'$ can continuously deform between any of the configurations induced
by the $P_i$s. This figure $F'$ solves the problem,
proving the first sentence of Theorem~\ref{thm:main2D}.

To prove Theorem~\ref{thm:motion}, we require two preliminary results;
the first dealing with polygonal chains and slender adornments, and
the other involving chainifying a given hinged figure.

\subsection{Slender Adornments}

Slender adornments are defined by Demaine, et al.\ in
\cite{planarshapes}.  An \term{adornment} is a connected, compact region
together with a line segment $ab$ (the \term{base}) lying inside the
region. Furthermore, the two boundary arcs from $a$ to $b$ must be
piecewise differentiable, with one-sided derivatives existing
everywhere. An adornment is a \term{slender adornment} if for every
point $p$ on the boundary other than $a$ and $b$, the primary inward
normal(s) at $p$, namely the rays from $p$ perpendicular to the
one-sided derivatives at $p$, intersect the base segment $ab$
(possibly at the endpoints).
In \cite{planarshapes}, it is shown that chains of slender adornments
cannot lock. Specifically, they show the following:

\begin{thm} {\rm \cite[Theorem~8]{planarshapes}} \label{thm:slender-adornments}
  A strictly simple polygonal chain adorned with slender adornments
  can always be straightened or convexified.
\end{thm}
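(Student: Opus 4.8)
The plan is to reduce the statement to the Carpenter's Rule Theorem \cite{Connelly-Demaine-Rote-2003,Streinu-2005}, which guarantees that the underlying polygonal chain of \emph{base} segments can be straightened (if the chain is open) or convexified (if closed) by a continuous \emph{expansive} motion, that is, a motion during which the distance between every pair of points lying on the bars is non-decreasing. Each slender adornment is rigidly attached to its base segment, so it is carried along rigidly by this motion of the bases. Thus it suffices to show that an expansive motion of the bases never brings two adornments into collision: since the chain is strictly simple the adornments start interior-disjoint, and if I can show they \emph{remain} disjoint, then the straightening or convexifying motion of the bases lifts to a valid non-self-intersecting motion of the entire adorned chain.

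The heart of the argument is a monotonicity lemma: during an expansive motion, the distance between any two adornments $\mathcal{A}_i$ and $\mathcal{A}_j$ attached to distinct links is non-decreasing. To prove it, fix a time and let $p_i\in\partial\mathcal{A}_i$ and $p_j\in\partial\mathcal{A}_j$ realize the distance $d=|p_ip_j|$ between the two compact regions. Since the regions are disjoint, the segment $p_ip_j$ meets $\partial\mathcal{A}_i$ perpendicularly at $p_i$ and $\partial\mathcal{A}_j$ perpendicularly at $p_j$, so it runs along an inward normal of each adornment. Slenderness now supplies the crucial fact: the inward normal at $p_i$ meets the base of $\mathcal{A}_i$ at some point $m_i$, and the inward normal at $p_j$ meets the base of $\mathcal{A}_j$ at some $m_j$. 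Because both normals lie on the single line through $p_i$ and $p_j$, the four points are collinear in the order $m_i,p_i,p_j,m_j$, whence $|m_im_j|=|m_ip_i|+d+|p_jm_j|$. The two outer summands are constants fixed by the adornment geometry, while $m_i$ and $m_j$ are points lying \emph{on the bars}; hence $|m_im_j|$, and therefore $d$, is non-decreasing by expansiveness. The differential version is equally clean: $m_i$ and $p_i$ lie on a common rigid link and are collinear with the normal direction, so their velocity components along that normal agree, and likewise for $m_j,p_j$; by the envelope theorem the rate of change of $d$ equals that of $|m_im_j|$, which is non-negative.

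With the lemma in hand the proof concludes by a first-contact argument: if two adornments ever met, there would be a first instant at which their separation dropped to zero, contradicting that it started positive and is non-decreasing. Combining this with the Carpenter's Rule motion of the bases yields the desired straightening or convexification of the whole adorned chain without self-intersection.

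The step I expect to be the main obstacle is making the monotonicity lemma rigorous at the boundary cases where the clean collinearity picture degrades. At a corner of a polygonal adornment the inward normal is not unique but ranges over a cone; the closest-pair segment $p_ip_j$ may be realized at such a corner, and one must verify that the slenderness hypothesis, which asserts that \emph{every} primary inward normal meets the base, covers the particular normal direction selected by the closest-pair condition, so that a collinear base point $m_i$ still exists. Care is also needed when a closest point falls at a base endpoint (where the argument degenerates to the distance to a single base vertex) and when the minimizing pair jumps discontinuously, so that one argues with one-sided derivatives almost everywhere and invokes continuity of the distance function rather than smoothness of the motion. Finally, \emph{adjacent} links share a hinge, so the corresponding adornments always have separation zero there; the monotonicity lemma does not by itself preclude interpenetration away from the hinge, and this case requires a separate local argument showing that the two adornments stay on opposite sides of the shared hinge throughout the motion.
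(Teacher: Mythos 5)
First, a point of context: the paper does not prove this statement at all---it is quoted verbatim as Theorem~8 of \cite{planarshapes} and used as a black box in Section~\ref{sec:motion}. So the comparison here is really against the proof in that reference, whose overall strategy you have correctly reconstructed: apply the Carpenter's Rule Theorem \cite{Connelly-Demaine-Rote-2003,Streinu-2005} to get an expansive motion of the chain of base segments, carry each adornment rigidly with its base, and use slenderness to show that adornments never collide under an expansive motion. Your monotonicity argument for two \emph{disjoint} adornments is sound: at a closest pair $(p_i,p_j)$ the connecting line contains an inward normal of each adornment, slenderness produces base points $m_i,m_j$ collinear with $p_i,p_j$ in the order $m_i,p_i,p_j,m_j$, and since $p_i-m_i$ is parallel to that line the rotational part of the rigid motion contributes nothing to the component of $\dot p_i$ along it, so $\frac{d}{dt}|p_ip_j|=\frac{d}{dt}|m_im_j|\ge 0$ by expansiveness. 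The corner-case worry you raise about non-smooth boundary points is not actually an obstacle: if both \emph{primary} (extreme) inward normals at a vertex hit the base segment at points $x_1,x_2$, then every ray in the normal cone between them crosses the subsegment $x_1x_2$, so the particular normal selected by the closest-pair condition also meets the base.

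The genuine gap is the one you name in your last sentence and then do not fill: adornments on \emph{adjacent} links share the hinge point, so their mutual distance is identically zero and both the monotonicity lemma and the first-contact argument are vacuous for them. This is not a routine boundary case---it is where the proof in \cite{planarshapes} has to do its real work, arguing directly at a hypothetical first crossing that a point in the overlap would force two base points to have strictly decreased their distance, contradicting expansiveness; that argument is designed precisely so that it does not degenerate when the two bases share an endpoint. As written, your proof establishes non-collision only for non-adjacent links, and since every chain of length at least two has adjacent links, the theorem does not follow. A secondary, smaller omission: for a closed chain the Carpenter's Rule Theorem only convexifies, and one should say a word about why reaching \emph{some} convex configuration suffices for the claimed statement (it does, since the theorem only asserts ``straightened or convexified,'' but the reduction to universal reconfigurability used later in the paper needs the open-chain case).
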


(In a \term{strictly simple} polygonal chain, edges intersect each other
only at common endpoints.)  This implies
that any strictly simple polygonal open chain is universally
reconfigurable, because to find a continuous motion between two
configurations $c_1$ and $c_2$, one may simply follow a motion from
$c_1$ to the straightened configuration $c$, and then reverse a motion
from $c_2$ to $c$.

\subsection{Chainification}

Next, we prove that any hinge figure has a refinement that is
chain-like and simply adorned:

\begin{thm}\label{thm:chainify}
  Any hinge figure $F$ has a chain-like refinement $G\prec F$ so that
  $G$ consists of a chain of equally-oriented obtuse triangles hinged
  at their acute-angled vertices.
\end{thm}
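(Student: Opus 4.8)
The plan is to reduce $F$ to a triangulated, tree-like figure and then thread a single open chain through its triangles along a depth-first traversal, cutting each triangle into a short sub-chain of thin, equally-oriented obtuse triangles in the manner of the kite-sweep construction.

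First I would arrange that $F$ is tree-like. Since refinement permits breaking hinges, and deleting a hinge that lies on a cycle of the incidence graph leaves the figure connected, I can repeatedly break such hinges until the incidence graph is a tree, producing a tree-like $F_0\prec F$. I then triangulate every link of $F_0$ with diagonals and hinge the resulting triangles according to the triangulation (placing each hinge at an endpoint of the corresponding shared edge), obtaining a tree-like refinement $F_1\prec F_0\prec F$ all of whose links are triangles.

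The core of the argument is to linearize the tree $F_1$ into an open chain. Fix a depth-first (Euler-tour) traversal of the incidence tree of $F_1$, traversed so that it is a walk and not a closed loop; this lists the triangles in an order in which consecutive triangles meet at a common hinge, and it crosses each tree-edge exactly twice. For each triangle $T$ and each pass of the tour through $T$---entering at one corner and leaving at another---I cut a thin fan/kite-sweep inside $T$ joining those two corners and rehinge it into a chain $\chain{\beta}{\ell_1,\ldots,\ell_m}$ of isosceles triangles, exactly as a kite-sweep is turned into such a chain in the proof of Theorem~\ref{thm:move_subtree}. Choosing every base angle $\beta<45^\circ$ makes each of these triangles obtuse at its apex and hinged only at its (acute) base vertices, and drawing all the fans belonging to the several passes through one triangle inside disjoint free-regions (Lemma~\ref{lem:free-regions}) keeps the cuts interior-disjoint, so gluing them back reconstitutes $T$. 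Because each tree-edge is traversed twice, I split its hinge into two nearby hinge points on the shared edge, one for the descending pass and one for the ascending pass; this is what prevents the two visits from closing a loop. Splicing the per-pass sub-chains together at these hinge points, in tour order, yields a single chain-like figure $G$. Since every operation was either a cut drawn inside a link of $F_1$ or the splitting/breaking of a hinge, reversing them (regluing the shared cut edges and remerging the doubled hinges) recovers $F$, so $G\prec F_1\prec F$, as required.

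The step I expect to be the main obstacle is the single-triangle gadget together with the global routing: I must cut each triangle into a sub-chain of \emph{equally-oriented} obtuse triangles, hinged only at acute vertices, that enters and exits at the two corners prescribed by that pass of the tour and that \emph{exactly tiles} its share of the triangle so the pieces reglue. Ensuring equal orientation (the kite-sweep packs triangles that fan around a corner rather than reflecting across shared edges), fitting all passes through a high-degree triangle into disjoint free-regions, and doubling each tree-hinge so that the assembled object is a single \emph{open} chain rather than a branching tree, is the delicate bookkeeping; Lemma~\ref{lem:free-regions} is exactly the tool that guarantees enough disjoint room for these simultaneous cuts.
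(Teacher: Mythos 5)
Your preliminary steps (breaking hinges that lie on cycles of the incidence graph to make the figure tree-like, then triangulating every link and re-hinging at endpoints of the diagonals) match the paper. The gap is in the core per-triangle gadget. The kite-sweep/fan construction you borrow from Theorem~\ref{thm:move_subtree} is designed to remove only a thin sliver lying inside a free region of a link, precisely so that the bulk of the link survives intact; that is the opposite of what chainification needs. Here \emph{all} of the area of every triangle $T$ must end up distributed among the obtuse triangles of the chain. A collection of thin fans routed between corners of $T$ cannot tile $T$: after cutting them out you are left with a large residual polygon that is not an obtuse triangle and that is hinged to several sub-chains, so the resulting figure is neither chain-like nor of the required form. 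You identify ``exactly tiles its share of the triangle'' as the main obstacle, but it is not a bookkeeping issue --- it is where the construction fails, and no choice of $\beta<45^\circ$ or of free regions (Lemma~\ref{lem:free-regions} only guarantees room for \emph{small} disjoint cuts, never a tiling) repairs it. The Euler-tour routing also breaks down at leaves of the incidence tree: a leaf triangle is entered and exited through the \emph{same} hinge, so ``the fan joining those two corners'' degenerates to a fan from a corner to itself, which again cannot account for the triangle's area.

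The paper avoids all of this with a much simpler gadget: each triangle $ABC$ of the triangulated tree-like figure is cut into exactly three pieces $AIB$, $BIC$, $CIA$ at the incenter $I$. Each piece is obtuse at $I$ (since $\angle BIC = \pi - \tfrac12\angle B - \tfrac12\angle C > \tfrac\pi2$), hence has two acute base angles, and the three pieces tile $ABC$ by construction. The chain order is then read off from the boundary walk of the tree-like figure: the walk traverses every edge of every link exactly once, consecutive bases share a base vertex (either a vertex of a link or a hinge point between links), and hinging successive incenter-triangles at these shared acute base vertices yields the open chain directly --- no Euler tour, no hinge doubling, and no free-region packing are needed.
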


\begin{proof}
  First we refine $F$ to consist of a tree of triangles hinged at
  vertices, as follows.
  For each $n$-sided link $L$ with $n\ge 4$, draw a collection of
  triangulating diagonals.
  Sequentially, for each such diagonal $V_1V_i$ currently in link $V =
  V_1V_2\ldots V_k$ (which may be a refinement of an original link),
  replace $V$ with two links $V_1V_2\ldots V_i$ and $V_iV_{i+1}\ldots
  V_1$ hinged at $V_1$, attaching the hinge originally at $V_i$ to its
  corresponding position on either refined piece.  The resulting
  figure indeed consists of triangles hinged at vertices.

  Next, if the resulting triangulated figure is not tree-like, we may
  repeatedly remove an edge from a cycle in the incidence graph
  (i.e. remove the corresponding link from its hinge) until the graph
  becomes tree-like. Call this refinement $H$.

  \begin{figure}
    \centering
    \begin{minipage}{0.45\textwidth}
      \centering
      \includegraphics[scale=.65]{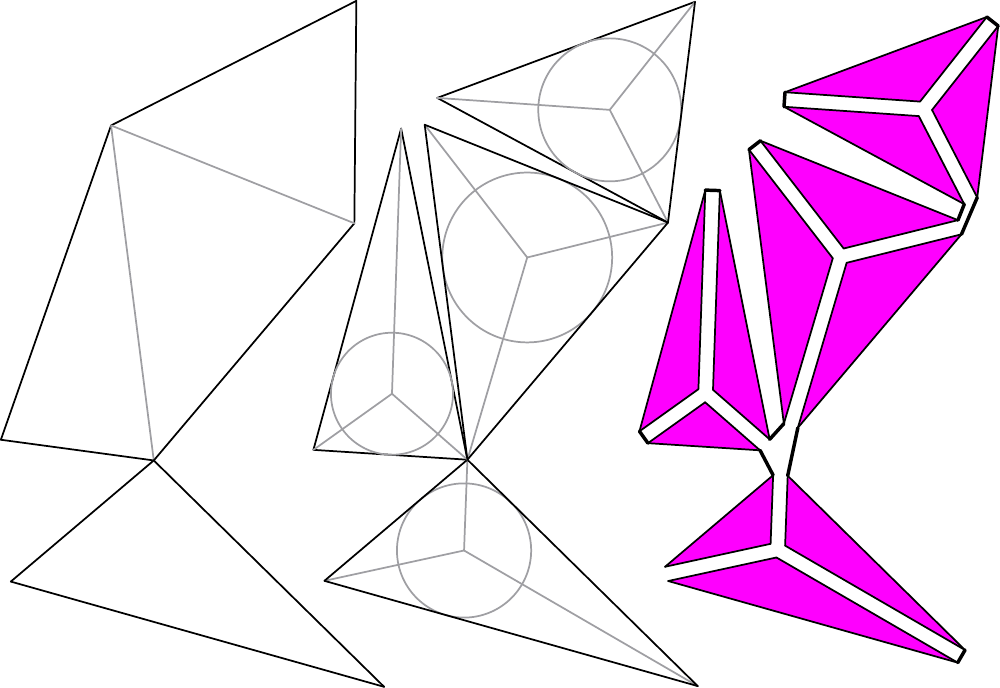}
      \caption{Chainifying a hinged figure}
      \label{fig:chainify}
    \end{minipage}
    \begin{minipage}{0.45\textwidth}
      \centering
      \includegraphics[scale=.65]{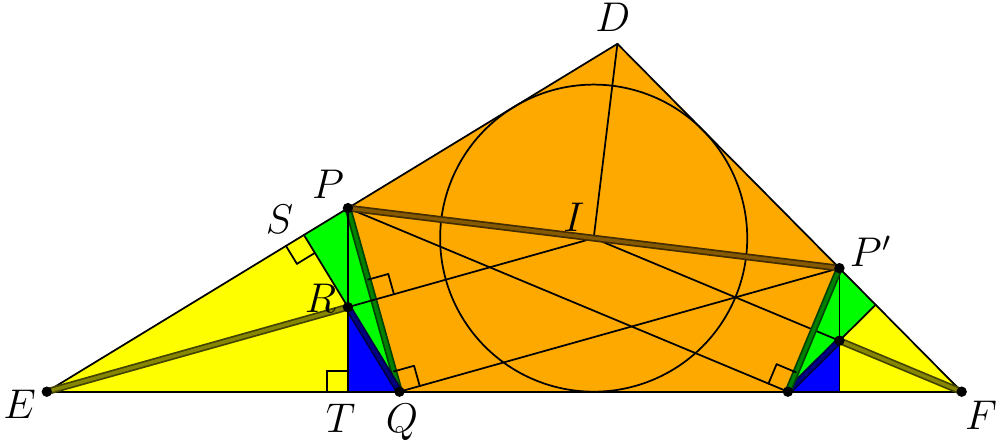}
      \caption{Refinement to hide bars from each other.}
      \label{fig:7-piece}
    \end{minipage}
  \end{figure}

  For each triangular link $ABC$ in $H$, divide $ABC$ into three
  triangles $AIB$, $BIC$, $CIA$, where $I$ is the incenter of
  $\triangle ABC$. Note that
  $
    \angle BIC = \pi - \frac12\angle B - \frac12\angle C
    > \pi - \frac12 (\angle A+\angle B+\angle C) = \frac\pi2,
  $
  i.e. $\angle BIC$ is obtuse, and likewise for the others. Finally,
  by hinging these obtuse triangles at the base vertices by walking
  around $H$'s boundary (Figure~\ref{fig:chainify}), we obtain the
  desired chain-like refinement $G$.
\end{proof}

\subsection{The Final Piece of the Puzzle}

We now prove Theorem~\ref{thm:motion}, i.e., that any hinged figure $A$
has a universally reconfigurable refinement~$B$.

\medskip

\ifabstract
\begin{proof}[of Theorem~\ref{thm:motion}]
\else
\begin{proof}[Proof of Theorem~\ref{thm:motion}]
\fi
  As shown in Theorem~\ref{thm:chainify}, $A$ has a refinement $C$
  consisting of obtuse triangles hinged along their bases. For each
  such obtuse triangle $\triangle DEF$, we create the following
  $7$-piece refinement (Figure~\ref{fig:7-piece}).
  \iffull

  \fi
  Let $I$ be the incenter of triangle $DEF$, and suppose the line
  through $I$ perpendicular to $DI$ intersects sides $DE$ and $DF$ at
  $P$ and $P'$ respectively; by obtuseness of $DIE$, $P$ lies on the
  interior of side $DE$, and likewise for $P'$. Reflect $P$ over angle
  bisector $EI$ to $Q$; it is not hard to check that $\angle PQP'$ =
  $90$. Define $R$, $S$, and $T$ as illustrated; since $PEQ$ is
  isosceles and acute, $R$ is inside $PEQ$. Repeat on the other side
  to form the $7$-piece refinement as illustrated. As the angles in
  all of the adornments are $90^\circ$ or larger, each can be easily
  checked to be slender. Furthermore, no bar can touch any other
  except at the vertices, since the bars in $DEF$ only touch the
  boundary of $DEF$ at single vertices, and no two bars within $DEF$
  are touching.  Thus, the resulting hinged figure $B$ is a strictly
  simple polygonal chain with slender adornments that refines $C$ (and
  hence refines $A$), so we are done.
\end{proof}

\section{Pseudopolynomial}
\label{sec:pseudopoly}

We now describe how to combine the preceding steps with ideas
of Eppstein \cite{Eppstein-2001} and the classical rectangle-to-rectangle
dissection of Montucla \cite{Ozanam-1778}
to perform our hinged dissection using only a
pseudopolynomial number of pieces, proving the second sentence of
Theorem~\ref{thm:main2D}. In contrast to
Theorem~\ref{thm:common-refinement}, we will only describe the transformation
between two given polygons rather than arbitrarily many.
A simple induction shows that the construction remains pseudopolynomial
for a constant number of target polygons.

The idea is as follows: the inefficiency in the preceding construction
is because movements may traverse the same hinges many times,
leading to a recursive application of pseudosubtree movement
and giving exponentially many interconnections.
By performing some simplifying steps prior to subtree movement,
we can instead ensure that movements are along mostly-disjoint paths
so that all recursion is constant-depth.

\begin{figure}
  \centering
  \subfloat[][]{\includegraphics[scale=.65]{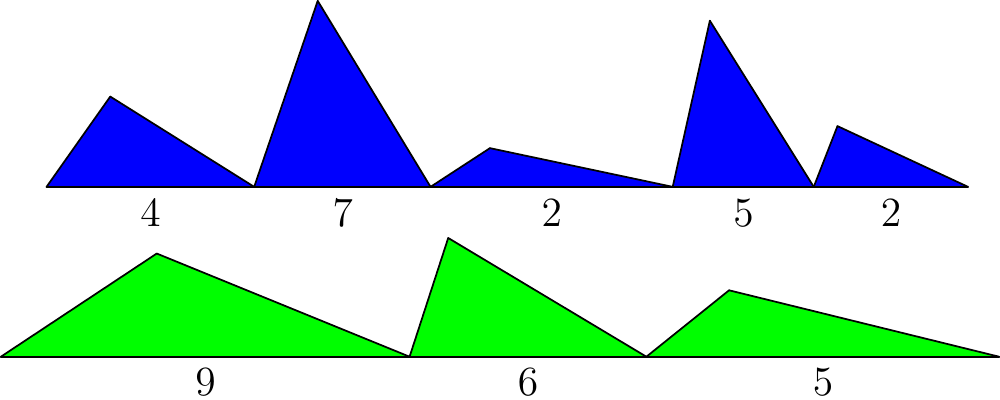}}
  \subfloat[][]{\includegraphics[scale=.65]{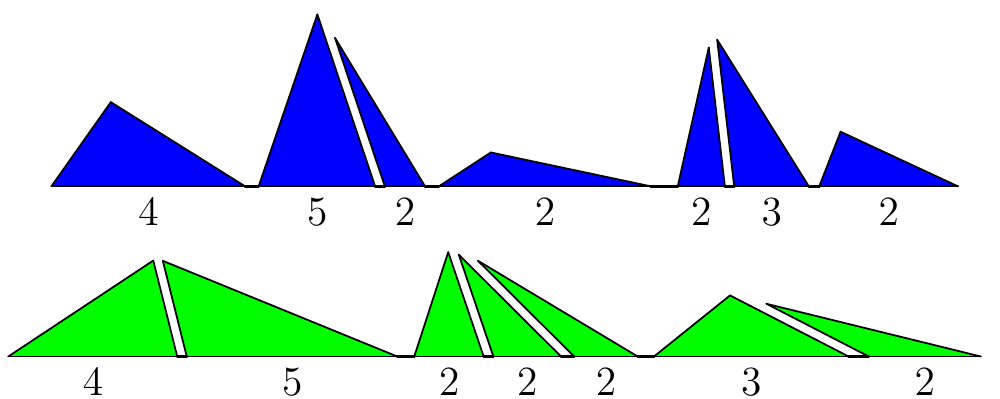}}
  \caption{Equalizing the areas of two triangular chains.}
  \label{fig:equalarea}
\end{figure}

To do this, given two figures, chainify them so we have two chains
of triangles. We then further subdivide them so that both chains
have the same number of links, and such that corresponding
triangles have the
same area. We do this using an idea from \cite{Eppstein-2001}:
cut the triangles from base to
apex along the lines that yield the desired area, hinging
at the base to maintain connectivity (see
Figure~\ref{fig:equalarea}).

Given these compatible chains, our task reduces to producing
hinged dissections between each pair of equal-area triangles
in such a way that the base vertices of one map to the base
vertices of the other. If each individual pair of triangles
requires only pseudopolynomially many pieces, we will be done.

\subsection{Pseudocuts}

Let $G \prec F$ be hinged figures. If we make
a cut in $F$, producing $F'$, we may not be able to
directly make the same cut in $G$: attempting to do so may
disconnect the figure. We give here a construction allowing
us to produce an $H$ refining both $G$ and $F'$. In keeping with
earlier terminology, we call the cut in $F$ a \term{pseudocut}
with respect to $G$.
This operation will be useful in keeping everything
pseudopolynomial.

\begin{thm}\label{thm:pseudocut}
Let $f_1$ and $f_2$ be boundary
points along some link of a tree-like figure $F$.
Let $F'$ be the tree-like figure
obtained by adding a straight-line cut between
$f_1$ and $f_2$ and hinging at $f_1$, and suppose $G\prec F$.
Then there exists a common refinement $H\prec G$ and
$H\prec F'$. Further, $H$ differs from $G$ only within
the free region of the boundaries defined by adding
the straight-line cut of $F'$ to $G$.
\end{thm}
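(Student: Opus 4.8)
The plan is to realize the chord $s=f_1f_2$ as a genuine cut inside $G$, using slender triangular chains as in Theorem~\ref{thm:move_subtree} to repair the connectivity that a naive cut would destroy. Let $L$ be the link of $F$ carrying $f_1$ and $f_2$. Since $G\prec F$, the region occupied by $L$ is tiled by several links of $G$ glued along shared edges; overlaying $s$ on $G$, I would record the points $f_1=x_0,x_1,\dots,x_m,x_{m+1}=f_2$ at which $s$ crosses these shared edges. First I would apply Lemma~\ref{lem:free-regions} to the boundary of $G$ augmented by the two sides of $s$ and the new vertices $x_i$, fixing a sufficiently small angle $\beta$ and radius $r$. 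Performing every subsequent modification inside this free region is exactly what yields the last assertion of the theorem, that $H$ differs from $G$ only there.

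Next I would confront the disconnection. Cutting $G$ straight along $s$ splits each crossed link into an upper and a lower part, and at a crossing $x_i$ the two links of $G$ meeting there may be joined only on one side of $s$, so the direct cut severs the other side and breaks the figure into pieces. To avoid this, I would mimic the first step of Theorem~\ref{thm:move_subtree}: subdivide $s$ into subsegments of length at most $r$ and, along each, cut a pair of slender $\tri{\beta}{\cdot}$ triangles from the free regions flanking $s$, hinging them in order into a chain $\chain{\beta}{\cdot}$ rooted at $f_1$. This chain furnishes a connection between the two sides of $s$ that survives the cut, so the resulting figure $H$ stays connected.

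It then remains to exhibit the two gluings. To recover $G$, I would configure the chain so that each cut triangle returns to the slot from which it was taken, closing the chord $s$; gluing across $s$ reconstitutes every crossed link of $G$, giving $H\prec G$. To recover $F'$, the chain instead folds against the cut so that $s$ opens, exactly as the chains $C$ and $D$ of Theorem~\ref{thm:move_subtree} swap which gap they fill; gluing the coarse links of $F$ back together then reconstitutes $L$ as its two halves joined only at the hinge $f_1$, giving $H\prec F'$. Here $f_1$ plays the role of the root $a$ in Theorem~\ref{thm:move_subtree}, and the case in which $s$ threads through a genuine refinement of $G$ is handled just as the pseudohinge is in Theorem~\ref{thm:move_pseudosubtree}.

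The main obstacle is this connectivity repair at the crossings $x_i$: the chain gadget must at once keep $H$ connected, close up flatly enough to reconstitute $G$, and vacate $s$ so as to open into $F'$ with the single hinge at $f_1$. Verifying that the slender triangles genuinely fit inside the free region, so that nothing outside it is disturbed, and that the two configurations reconstitute $G$ and $F'$ exactly, is the technical heart; it proceeds by the same free-region and double-duty-chain bookkeeping already carried out in Theorems~\ref{thm:move_subtree} and~\ref{thm:move_pseudosubtree}.
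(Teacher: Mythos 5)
There is a genuine gap. The paper's proof is an \emph{induction along the cut}: at each crossing point $h_i$ of the cut with an internal edge (or hinge) of $G$, cutting the next segment severs an entire subtree, which is then re-rooted from $h_i^a$ to $h_i^b$ by a full rooted-subtree (or pseudosubtree) movement --- each such movement carrying its own triangle chain \emph{and} its own kite-sweep at the destination. Your proposal collapses all of this into a single chain rooted at $f_1$ and asserts it ``furnishes a connection between the two sides of $s$.'' Two things break. First, the double-duty-chain mechanism of Theorem~\ref{thm:move_subtree} requires two homes for each chain (the slots it was cut from and a kite-sweep it can move into); your single chain has only the slots along $s$, so in the configuration realizing $F'$, when it must vacate those slots to let the cut open, there is no excavated region for it to occupy --- the $F'$ configuration cannot be realized. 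Second, one chain does not repair connectivity at every crossing: the piece of the figure beyond $x_i$ on one side of the cut becomes a separate subtree and must be independently re-attached on the other side, which forces one movement per crossing.

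This leads to the second missing idea, which is the technical heart of the paper's proof and is not recoverable by citing ``the same bookkeeping as before'': the $m$ subtree movements all route their chains along the \emph{same} portion of the cut line back toward $f_1$, so their supporting triangles overlap. The paper resolves this by reusing the base points of the earlier cut-out triangles and decreasing the apex angles slightly at each stage so the nested chains remain disjoint. Without this device (or a substitute), the per-crossing gadgets collide and the construction fails. Your paragraph on the ``main obstacle'' correctly identifies where the difficulty lives, but deferring it to Theorems~\ref{thm:move_subtree} and~\ref{thm:move_pseudosubtree} does not supply either the per-crossing induction or the nested-angle trick, neither of which appears in those proofs in the form needed here.
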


\begin{proof}
  Consider the behavior of $G$ along the edge from $f_1$ to
  $f_2$. In $G$ the pseudocut may traverse several hinged
  pieces. Suppose first that the pseudocut hits no
  existing hinges.
  Let $\{h_i\mid 1\le i\le n\}$ be the points of
  intersection between the pseudocut and the existing
  edges of $G$ (so that in particular $h_1 = f_1$ and
  $h_n = f_2$). After the cut has been made, distinguish
  identified vertices on each side as $h_i^a$ and
  $h_i^b$. We proceed inductively along the
  segments in $H$, beginning with $h_1$ to $h_2$ which
  we can easily cut and hinge exactly as in~$F$.
  
  \begin{figure}
    \centering
    \subfloat[][]{\includegraphics[scale=.65]{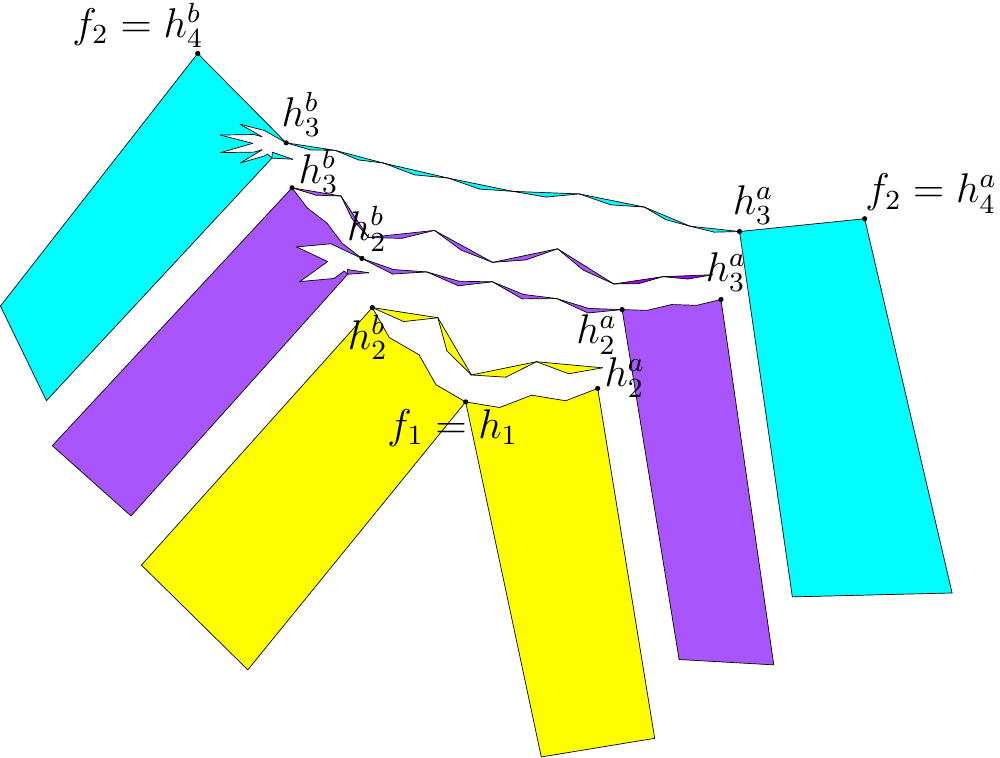}}
    \subfloat[][]{\includegraphics[scale=.65]{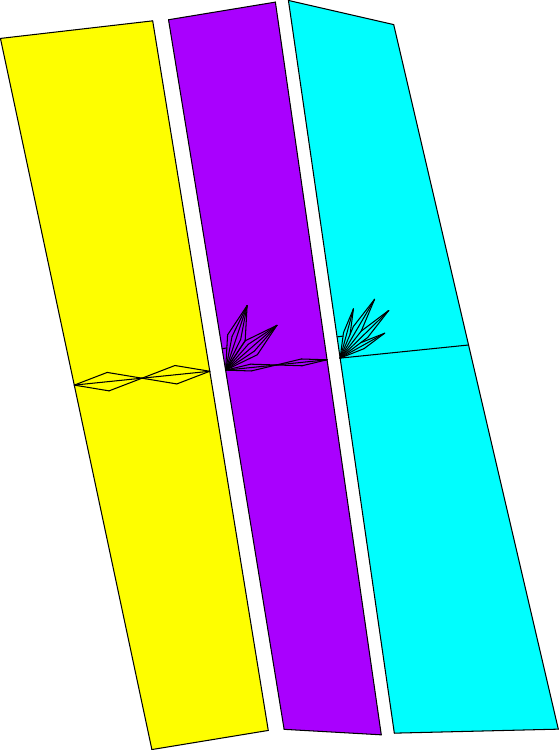}}
    \subfloat[][]{\includegraphics[scale=.65]{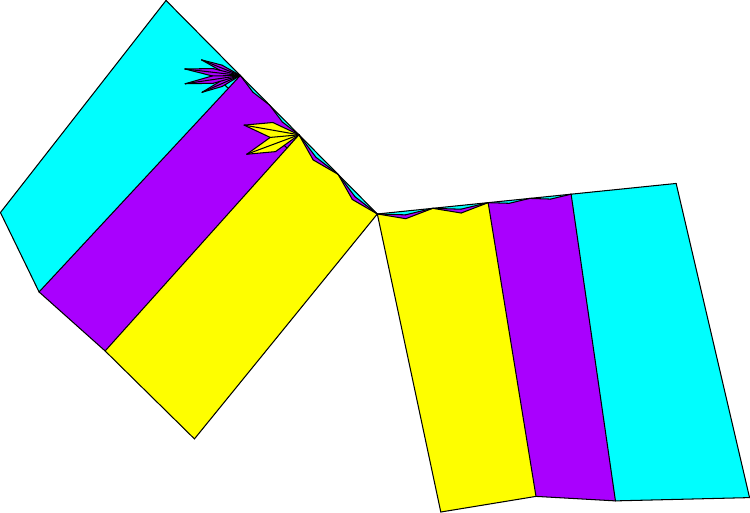}}
    \caption{Making a pseudocut across existing edges}
    \label{fig:pseudocut}
  \end{figure}
  
  Now suppose we have already modified all segments
  up to $h_i$ to refine $F$ appropriately. Cut the
  segment from $h_i^a$ to $h_{i+1}$, hinging at $h_i^a$,
  and perform a rooted subtree movement from $h_i^a$
  to $h_i^b$. We modify this movement in two ways:
  first, instead of tracing the entire exterior path between
  the two points, we use only the direct path
  along the cut line, with the intermediate vertices
  as base points of our triangle chain. Second,
  since this path will cross the paths used by previous
  segments, we reuse all
  base points from earlier cut-out triangles,
  decreasing the angle slightly to separate them;
  see Figure~\ref{fig:pseudocut} for an example of this construction.
  Repeating this for all segments, then, we obtain
  the full pseudocut as desired.
  
  Now consider the case where
  one or more hinges of $G$ lie on the cut edge. We only
  need that our inductive step can cut hinges as well
  as simple links. Where before our inductive
  transformation was based on subtree movement, for
  this case we will use pseudosubtree movement.
  Since we have already covered cutting links, we may here
  consider only links entirely on one side of the cut
  line. Treat all such links as a rooted pseudosubtree
  and again perform pseudosubtree movement traversing
  only the cut path instead of the entire figure
  boundary, and again reusing previous boundary triangle
  base points. The rest of the argument is identical.
  
  Combining these two inductive
  steps allows us to produce the desired
  refinement across any existing configuration of the cut line in
  $G$, so we are done.
\end{proof}

\subsection{Rectangle to Rectangle}

\begin{figure}
  \centering
  \subfloat[][Swinging $B$ back and forth to cover $A$]{
    \includegraphics[scale=.65]{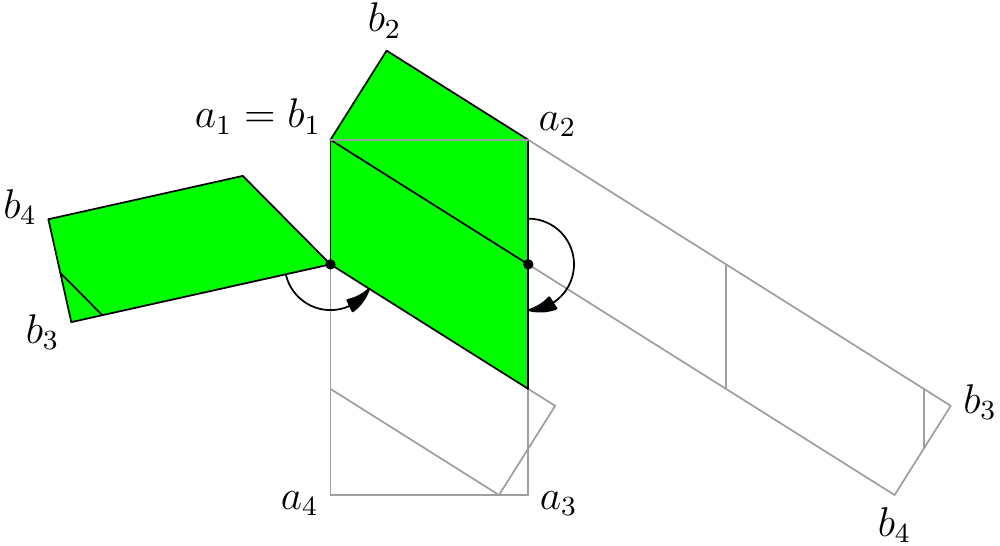}
    \label{fig:recttorect_snake}}\hfil
  \subfloat[][Capping the top and bottom of $B$]{
    \includegraphics[scale=.65]{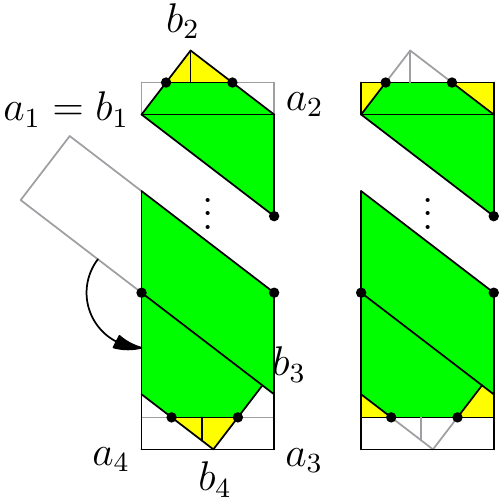}
    \label{fig:recttorect_cap}}\hfil
  \subfloat[][Moving the end of $B$ to ``wrap around'' $A$]{
    \includegraphics[scale=.65]{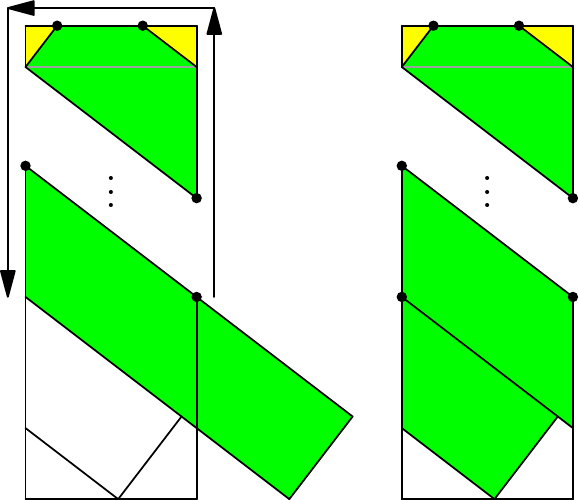}
    \label{fig:recttorect_wrap}}
  \caption{The stages of the rectangle-to-rectangle transformation.}
\end{figure}

\ifabstract

Our construction also requires an efficient
transformation between arbitrary equal-area
rectangles. Since this dissection is nearly identical
to the 230-year-old classical dissection, we outline this
only briefly.

First (Figure~\ref{fig:recttorect_snake}), snake the narrower
rectangle ($B$) back and forth, covering as much of the other
rectangle ($A$) as possible. Second
(Figure~\ref{fig:recttorect_cap}), cut the triangular portions of
$B$ and fold them into rectangular caps, if possible.
Third (Figure~\ref{fig:recttorect_wrap}), if this was
not possible then we need to move the extended portion
of $B$ to the left side of $A$ as if it ``wrapped around''
when hitting the right edge. After this, the second step
is guaranteed to be successful. For full details of this
construction, see the Appendix.

\else

The next step is to describe a pseudopolynomial hinged
dissection between any two equal-area rectangles. We will
compose this with our pseudocut operation in the next
section to dissect between any two triangles as well.
This step is heavily based on the classical (non-hinged)
dissection, modified using our subtree
operations to allow all operations to be hinged.

Take two rectangles of equal area, $A$ and $B$, and
suppose $B$ has the smallest minimum side length.
We begin by aligning both rectangles with their shorter
edge on the horizontal axis
and longer edge on the vertical axis, and identifying
the two top left vertices. We then rotate $B$ counterclockwise
until its lowest vertex is horizontally aligned with the base of $A$.
Label the vertices $a_i$ and $b_i$ for $1 \leq i \leq 4$, starting
at the top left and moving clockwise.

At this point the horizontal cross-section of both rectangles
is equal. Now, cut
$B$ along edge $a_2 a_3$ (the right side), hinging at the bottom of
the cut, and rotate the extended portion of $B$ clockwise
by $\pi$ to cover a strip of $A$. Again, cut $B$, now along
the left side $a_1 a_4$, hinging at the bottom, and rotate it back in,
covering another horizontal strip of $A$. Continue in this
way until the remaining segment of $B$ extending past $A$'s
boundary is no longer enough to cover an entire horizontal
strip (see Figure~\ref{fig:recttorect_snake}).

Now consider the subtriangle $a_1 b_2 a_2$ of $B$. We cut
it horizontally at half its height, and vertically from
$b_2$, and rotate the resulting components out
to form a rectangular cap with the same width as $A$
(see Figure~\ref{fig:recttorect_cap}).

There are now two cases: either the remaining segment of $B$
extends up and left, or down and right. In the fortunate
former case, we can perform the entire transformation using
only classical-style manipulations: swing the extended portion
back into $A$. This will give a ``triangular'' base, similar
to the triangle that was on top of $A$, but offset horizontally
and wrapping through the edge of $A$. We can make this
rectangular as well by a nearly identical transformation:
cut horizontally at its vertical midpoint, and vertically
as shown in Figure~\ref{fig:recttorect_cap} so that the pieces line up
with the border of $A$ when we swing them out.
A quick case analysis shows that this always works.

Now consider the remaining case where the end of $B$ extends
down and to the right. We can directly reduce this to the
previous case: cut $B$ along edge $a_2 a_3$, hinging at
the {\it top}, and use rooted subtree movement to move this
subtree counterclockwise around the figure to line up with
the left side of $A$ (Figure~\ref{fig:recttorect_wrap}). The configuration
of the base is now as
though $B$ had extended up and left and we rotated the
extended piece down
into $A$ as before, the only change being the kite sweep
at the top vertex. Since this kite sweep can easily
be kept above the vertical midpoint of the triangle,
it doesn't interfere with our new cuts, and 
the same capping strategy works
without alteration.

With these steps, $B$ is transformed into a rectangle having the
same area and width (and therefore same height) as $A$.
\fi

\subsection{Unaltered Subtree Movement}

It will be useful in the analysis to be able to perform
subtree movements without modifying the subtree. This is
in contrast to earlier constructions, which cut the
kite tree out of the subtree being moved.

\begin{figure}
  \centering
  \subfloat[][]{\includegraphics[scale=.65]{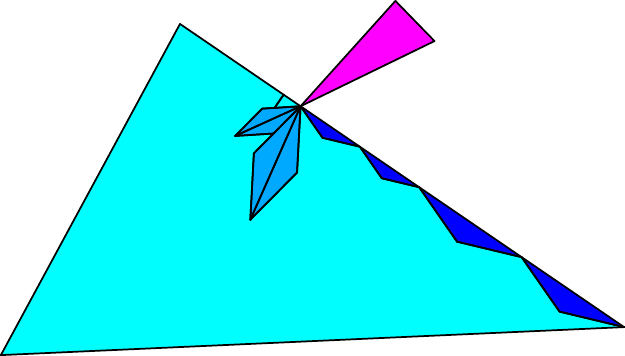}}\hfil
  \subfloat[][]{\includegraphics[scale=.65]{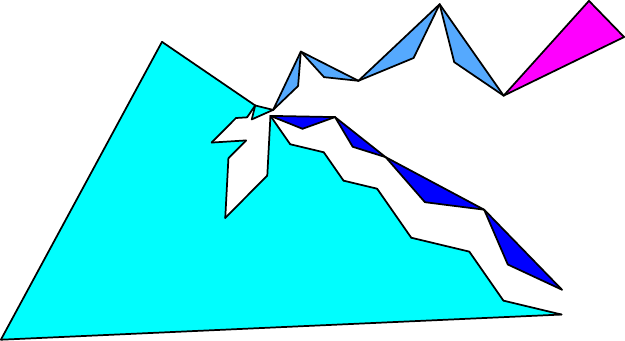}}\hfil
  \subfloat[][]{\includegraphics[scale=.65]{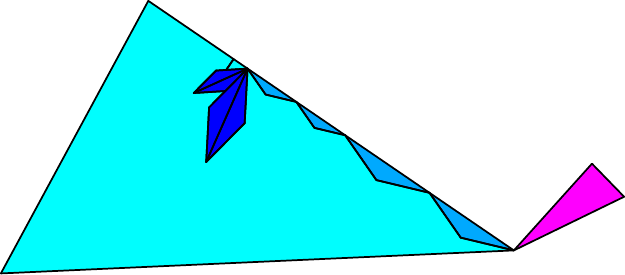}}
  \caption{Moving a subtree by modifying only the parent tree}
  \label{fig:trickymove}
\end{figure}

Accomplishing this is a simple modification of the
earlier operation: we cut both the kite sweep and the
triangle chain out of the free region of the parent
subtree, and only need to alter the hinge connection
points. The kite sweep will form a chain connecting
at one end to the source vertex and at the other
end to the subtree being moved, and the triangle
chain will be a loose chain hanging from the
source vertex (geometrically; it is not connected
directly to the kite sweep); see Figure~\ref{fig:trickymove}.
To move the subtree, we then extend the kite sweep
out, fill it in with the hanging triangle chain,
and place the moving tree at the destination point.

\subsection{Polygon to Polygon}

With the pieces described, the transformation is
simple: first, perform the equal-area chainification
on both input polygons. Then convert each triangle
to a rectangle using the same cutting procedure
described in
Figure~\ref{fig:recttorect_cap}
for capping the top of the rectangle.
We would then like to map between the two rectangles using the
rectangle-to-rectangle transformation. However, to ease
analysis, we actually view the rectangle-to-rectangle
transformation as being done first, and then
transform the rectangles back into the original
triangles by making the necessary pseudocuts as though
the figures were a solid rectangle.

After these steps, we will have pairwise dissections
between the triangles in the chain. In the last step,
we use
Unaltered Subtree Movement to adjust the hinge joints
between pairs of triangles to lie on the correct
boundary points. This yields a common refinement of the
two triangle chains, and we are done.

\subsection{Analysis}

\begin{thm}\label{thm:pseudopoly}
The procedure described above gives a dissection with a
pseudopolynomial number of pieces.
\end{thm}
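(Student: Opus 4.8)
The plan is to bound the number of pieces additively as the figure passes through the pipeline of the preceding subsections, and to verify that no stage triggers the recursive blow-up that makes Theorem~\ref{thm:common-refinement} exponential. Write $n$ for the total combinatorial complexity of the two input polygons and $N$ for the side length of the integer grid on which they are drawn, so that every edge length lies in $[1,N\sqrt2\,]$ and, by Pick's theorem, every lattice-triangle in a triangulation has area in $[\tfrac12,N^2]$. All side lengths and (as I will argue) all inradii are then pinned between $\Omega(1/N)$ and $O(N)$, which is what ultimately forces every count to be a fixed power of $N$ times a polynomial in $n$.

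First I would account for the preprocessing. Chainification (Theorem~\ref{thm:chainify}) triangulates each input into $O(n)$ triangles and splits each at its incenter, producing $O(n)$ obtuse triangles per chain; the equal-area step of Figure~\ref{fig:equalarea} overlays the two sequences of partial areas and cuts base-to-apex at the merged breakpoints, so each chain ends with at most $k_1+k_2-1=O(n)$ triangles, matched into pairs of equal area. The key observation is that this step is \emph{additive}: it yields $O(n)$ independent matched pairs rather than compounding complexity. Hence the whole task reduces to dissecting $O(n)$ pairs of equal-area triangles (with corresponding base vertices), and it suffices to bound each such pairwise dissection by a pseudopolynomial quantity.

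Next I would bound one pairwise dissection. The triangle-to-rectangle caps (Figure~\ref{fig:recttorect_cap}) and the closing unaltered subtree movements each cost $O(1)$ pieces or a number proportional to the length of a path crossing only $O(1)$ triangles, so they are harmless. The dominant term is the classical rectangle-to-rectangle dissection \cite{Ozanam-1778}: snaking the narrow rectangle across the wide one (Figure~\ref{fig:recttorect_snake}) uses a number of strips equal to the ratio of the rectangles' relevant sides. I would bound this ratio as follows. Both rectangles of a pair share the common pair-area $\mathcal A$, and each is built from a triangle whose base-to-apex height is the inradius $\rho$ of its lattice parent; cutting to area $\mathcal A$ forces the base length to be $2\mathcal A/\rho$. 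Since every such inradius lies in $[\Omega(1/N),O(N)]$, the two bases—and hence the corresponding rectangle sides—are comparable up to a factor $O(N^2)$, independently of how small $\mathcal A$ may be; combined with the grid bound on the other dimension this makes the number of snake strips a fixed power of $N$. Finally, replacing the rectangles by the original triangles uses the pseudocuts of Theorem~\ref{thm:pseudocut}; each pseudocut adds pieces proportional to the number of existing edges and hinges it crosses, and because the construction reuses earlier base points (only perturbing angles) these cuts do \emph{not} recurse, so a constant number of them multiplies the per-pair count by only a constant factor.

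The main obstacle, and the heart of the proof, is exactly this last structural point: I must show that the preprocessing has arranged every subtree movement and every pseudocut to run along mostly-disjoint paths, each traversed $O(1)$ times, so that the per-pair cost is the \emph{sum} of the stage costs rather than their product—i.e.\ that the recursion depth is constant. Granting that, each of the $O(n)$ pairs costs a pseudopolynomial number of pieces, so the total is pseudopolynomial in $n$ and $N$; and since every gadget is produced by an explicit finite procedure whose steps are in bijection with the pieces it creates, the running time obeys the same bound. The delicate sub-step is the lower bound $\rho=\Omega(1/N)$ on the parent inradii used to control the rectangle side-ratio, which I expect to prove from $\rho=\mathrm{Area}/s$ together with the facts that each parent is a lattice triangle of area at least $\tfrac12$ and semiperimeter $s=O(N)$.
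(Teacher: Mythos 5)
There is a genuine gap, and it sits exactly where you flag it with ``granting that.'' Your plan is to count pieces additively stage by stage, bounding feature sizes only for the \emph{input}: lattice edge lengths in $[\Omega(1/N),O(N)]$ and parent-triangle inradii likewise. But the number of pieces produced by each gadget is not determined by the input's feature sizes. A chain $\chain{\beta}{\ell_1,\ldots,\ell_t}$ or kite-sweep must fit inside the free regions of the links \emph{of the current intermediate figure}, so $t$ is governed by the path length divided by the smallest free-region radius, and $\beta$ by the smallest free-region angle, at the moment the gadget is applied. These quantities degrade as the construction proceeds: the aligning rotation introduces a quadratic irrationality, pseudocut intersections create new vertices at coordinates obtained by further multiplications and divisions, and---most dangerously---if consecutive kites are placed by composing rotations, the algebraic degree of the coordinates grows with the number of kites and the minimum vertex separation can collapse far below any fixed inverse polynomial. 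Your additive accounting never revisits these intermediate feature sizes, so the per-stage piece counts you sum are not actually bounded.

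The paper's proof is organized around precisely this issue and takes a different route: rather than counting pieces directly, it lower-bounds the smallest non-zero value arising anywhere in the computation (hence the minimum distance between distinct vertices), and then gets the piece bound from a trivial area argument. To control that minimum value it views all coordinate computations as a DAG and invokes separation bounds (from Mehlhorn et al.), showing that multiplications, divisions, and algebraic extensions are nested only to constant depth and additions to linear depth; the essential trick making this true is to replace exact nested rotations by independent rational approximations of each angle (leaving slack by using only half of each free region), so that every kite and every interior vertex is computed independently at constant depth. Your proposal contains none of this machinery, and the two structural claims you would need in its place---that every gadget's cost depends only on input feature sizes, and that the recursion depth is constant---are respectively false as stated and left unproven. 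The reduction to $O(n)$ independent equal-area triangle pairs and the bound on the number of snake passes are correct and match the paper, but they are the easy part.
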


\ifabstract
The proof of this theorem is deferred to the Appendix.
\else
\medskip
\begin{proof}
Since our construction worked independently on each pair
of equal-area triangles, and the equal-area chainification
step itself was pseudopolynomial, it suffices to show that
our transformation is pseudopolynomial when applied to
a single pair of triangles.

We analyze the maximum number of pieces produced by
our dissection by considering the related question of the
smallest possible non-zero value that can be computed
at any intermediate step of the dissection. This value
gives a lower bound on, for instance, the smallest distance
between any two distinct vertices. If this distance is at
least inverse polynomial, then a simple area argument shows
that the number of pieces in the dissection is also at
most polynomial.

Our task, then, is to show that the smallest non-zero
value produced during the computation is indeed inverse
polynomial, a question that can be dealt with almost
entirely algebraically. We will use the bounds from
\cite{mehlhorn}. Under these bounds,
it suffices to show that if we view all numerical computations
as a DAG, then algebraic extensions, multiplications,
and divisions (by previously computed expressions) are never
nested to more than constant depth, and that addition
and subtraction, and multiplication or division by fixed
constants, are never nested to more than linear depth.
Under these constraints, if we are still able to compute
the coordinates of all vertices on our figure, we will be done.
For this section, refer to addition and subtraction, and
multiplication and division by fixed constants, as
simple arithmetic operations.

It is important in maintaining constant depth of
multiplications that we never rotate the same point
or vector more than a constant number of (nested)
times. For instance, while for simplicity we have
described adjacent kites as touching along a common
angle, to attain a pseudopolynomial bound we cannot
do this since each consecutive kite would give another
nested rotation. 
Instead,
whenever we need to perform a rotation, we will use
a rational approximation of the angle. We can approximate
an angle within an arbitrary error bound $\delta$
by using the ratios of integers that are polynomial in
$1/\delta$ (for instance by solving for an appropriate
Pythagorean triple). We can further ensure that we always
have plenty of room for such approximations by restricting
ourselves to free regions that use only half of the true
available angular space, so we know that there will also
be polynomial error tolerance built into any desired
angles, so any single rotation will still only involve
polynomial values. Kite sweeps will thus be spread out
near each other but with angles ensuring they do not
intersect.

Now let us consider the arithmetic/algebraic depth of the
expressions produced during each step of our transformation.
As mentioned above, for the purposes of analysis we
consider the procedure as beginning with two equal-area
rectangles and then map back to the original triangles
by adding the appropriate pseudocuts.

First: When we rotate the narrower rectangle to have the
same horizontal cross-section as the first, this requires a
quadratic extension, as well as a constant number of
multiplications and divisions to rotate all the vertices of
the second rectangle.

Second: snaking the second rectangle back and forth along
the first (Figure~\ref{fig:recttorect_snake}) requires
only additions and subtractions, since
all rotations are by $\pi$. Since each pass of the rectangle
must cover a strip whose height is at least the width of
the second rectangle (which is at least half the height
of its originating triangle), we require a linear number
of passes,
so the depth of simple arithmetic operations here is at
most linear.

Third: Capping the triangles at the top and bottom
(Figure~\ref{fig:recttorect_cap}). This also
requires only rotations by $\pi$, except for the extra step
of finding the midpoints of the triangles' ascent, which is just a
simple arithmetic operation (division by 2).

Fourth: Moving $B$'s subtree back to the left side of
$A$ (Figure~\ref{fig:recttorect_wrap}), if
necessary. The base vertices of the triangles cut out of the
border of the figure can easily be placed on rational points
with low relative denominator.
We will choose the inner angle of the triangles, as described
earlier, to remain within the free area along the border while
requiring a constant number of multiplications and divisions
with suitable rational numbers. After
choosing the interior angle, finding the interior vertex of a
triangle can be done by intersecting its two edges, which
also takes a constant number of multiplications and divisions,
and can be done independently for each internal vertex to
prevent nesting. Note similarly that each edge traversed by
the path can be handled independently.

This still leaves the kite sweep inside the moved subtree.
We can produce this from the triangle chain:
each triangle pair can be made into a kite by rotating
by $\pi$. Once this is done for each triangle pair,
we place the kites inside the appropriate vertex by
choosing approximate rotation angles as described
above so the kites are non-overlapping. Each rotation requires
a constant number of multiplications per vertex, but again each
kite can be handled independently (and uses only expressions
from its originating triangles, which were themselves computed
independently of other triangle pairs), so we still preserve constant
depth in our computations.

After this movement is complete, the bottom is capped identically to
the previous step.

Fifth: The pseudocuts. We consider these independently. Since
there are only four of them, if we show that any single pseudocut
increases the simple arithmetic depth of an existing figure
at most linearly, and the depth of remaining operations by
at most a constant, we will be done.

Consider a single pseudocut. We know that it intersects with
at most a polynomial number of edges for the simple
inductive reason that so far there are only polynomially many.
Furthermore, it lies in our existing algebraic
extension, since all cuts were simple rational cuts with
respect to the original input triangles, and the transformation
to rectangles involved only rotations by $\pi$ and the single
aligning rotation we performed in the first step. Thus, the
coordinates defining the cut are so far at only constant
computational depth.

Now, the initial vertices we need to add to our figure
are the intersections
of the cut line and any incident edges in the figure.
Each of these can be computed independently of the rest,
and each requires a constant
number of multiplications and divisions. After this we need to
cut out the triangle chains along the boundary of the cut, as
well as the kite sweep at each joint. Once more we exploit the
fact that each kite can be dealt with mostly independently:
the bases of the triangles are simple rational points along
the cut edges. The interior angles can be approximated as before,
although now we may need to compute a polynomial number of them
because of the nested triangles. However, in that case each
interior angle can also be dealt with independently, so even
though there are many interior vertices corresponding to each
base line, they are all independently still at constant
computational depth. The kite sweeps are dealt with exactly
as above, a rotation by $\pi$ followed by an approximate
rotation into the appropriate vertex while avoiding intersections.
As all of these are still independent (relying only on the
constant-depth computation to produce the appropriate base
vertices,
plus the constant-depth computation to produce the matching
interior vertex, plus the constant-depth computation to produce
a suitable rotation angle for the kite), all of this is still
done in only constant depth.

Sixth: Unaltered subtree movement. This follows from
the same argument as the ordinary subtree
movement from the fourth step. The changes in the specific
vertex used for the kite sweep make no difference to the
computational depth required. Since there are only two
such subtree movements, the added simple arithmetic
depth is again linear, and depth of remaining operations
is again constant.

Thus, we see that the computations of all steps together
remain within the required bounds, and thus all vertices
are indeed at least an inverse polynomial distance apart.
\end{proof}
\fi

\section{Three Dimensions}
\label{sec:3D}

We now consider hinged figures in three dimensions.  A \term{3D hinged
  figure} is a collection of simple polyhedra called \term{links}
hinged along common positive-length edges called \term{hinges}. As
before, the cyclic order of links around a hinge must remain constant.

Not every two polyhedra of equal volume have a common dissection.
Dehn \cite{Dehn-1900} proved an invariant that must necessarily match
between the two polyhedra.  For example, Dehn's invariant forbids
any two distinct Platonic solids from having a common dissection.
Many years later, Sydler \cite{Sydler-1965} proved that polyhedra $A$ and $B$
have a common dissection if and only if $A$ and $B$ have the same volume and
the same Dehn invariant.  Jessen \cite{Jessen-1968} simplified this proof
by an algebraic technique and generalized the result to 4D polyhedral solids.
(The 5D and higher cases remain open.)  Dupont and Sah \cite{Dupont-Sah-1990}
gave another proof which illustrates further connections to algebraic
structures.

Clearly, if two polyhedra have no common dissection, then they also have
no common hinged dissection.  We show the converse: given a common dissection
of polyhedra $A$ and~$B$, we can construct a common hinged dissection of $A$
and~$B$.  More generally, we have the following 3D analog of
Theorem~\ref{thm:common-refinement}:

\begin{thm} \label{thm:3D-common-refinement}
  Given $n$ polyhedra $P_1,\ldots,P_n$ of equal volume and equal Dehn
  invariant, there exists a hinged figure $H$ such that $H\prec P_i$
  for $1\le i\le n$.
\end{thm}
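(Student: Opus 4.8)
The plan is to lift the entire 2D argument to three dimensions, replacing the role played by the Lowry-Wallace-Bolyai-Gerwien Theorem with Sydler's theorem, and replacing vertex-hinges by edge-hinges throughout. The overall architecture mirrors the proof of Theorem~\ref{thm:common-refinement}: first invoke the existence of a common (unhinged) dissection of $P_1,\ldots,P_n$ into finitely many polyhedral links $\{L_i\}$, which is guaranteed precisely because the $P_i$ share a common volume and Dehn invariant. Hinging these links together along shared edges yields a tree-like 3D hinged figure $A$. As in the 2D case, I would then argue inductively: given a refinement $B_{t-1}$ simultaneously refining $A$ and $P_1,\ldots,P_{t-1}$, I produce $B_t\prec B_{t-1}$ also refining $P_t$. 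This reduces everything, exactly as before, to a 3D analog of the rooted (pseudo)subtree movement lemmas (Theorems~\ref{thm:move_subtree} and~\ref{thm:move_pseudosubtree}), since any target hinging $A_t$ of the links $\{L_i\}$ can be reached from $A$ by finitely many rooted subtree movements.

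The main work, therefore, is to re-prove the subtree-movement machinery with edge-hinges. Here the key conceptual substitution is dimensional: in the plane we cut out chains of isosceles \emph{triangles} hinged at base vertices from the two-dimensional free-regions guaranteed by Lemma~\ref{lem:free-regions}; in space I would instead cut out chains of thin isosceles \emph{triangular prisms} (or more generally slender ``wedge'' polyhedra) hinged along their base edges. The free-region lemma must be upgraded to a three-dimensional statement: along each hinge-edge of each link there is enough room inside the polyhedron to carve a sufficiently thin prism, and around each edge there is an analogous ``free dihedral sector'' into which a prism-sweep (the 3D replacement for the kite-sweep $\kitesweep{\beta}{\ldots}$) can be inscribed. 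I would prove this by triangulating each link into tetrahedra and choosing a sufficiently small dihedral angle $\beta$ and radius $r$, in direct analogy with the triangulation argument in Lemma~\ref{lem:free-regions}. Crucially, the decomposition of a prism-sweep into a prism-chain, and the re-hinging combinatorics used to realize both the source configuration (filling the holes from which pieces were cut) and the target configuration (filling the prism-sweep at the destination edge), are essentially the cross-product of the planar construction with an interval in the hinge-edge direction, so the combinatorial content of Theorems~\ref{thm:move_subtree} and~\ref{thm:move_pseudosubtree} carries over verbatim once the geometric ingredients are in place.

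The step I expect to be the genuine obstacle is geometric rather than combinatorial: in the plane, ``cutting along the boundary path from $b$ to $b'$'' follows a one-dimensional curve, and the pieces removed are indexed by segments of that path. In 3D the hinges are one-dimensional edges and the boundary is a two-dimensional surface, so a rooted subtree in 3D is attached to its parent along an entire shared edge, not a single point. The subtree-movement construction must thus transport a prism-chain whose cross-sections interpolate along the length of a hinge-edge, and one must verify that the free-sector prisms at the source and destination edges genuinely admit interior-disjoint placement of the swept prisms after the slight successive decrease of dihedral angle (the 3D analog of choosing $\beta<\alpha/2t$ and nesting the kite angles). Care is also needed because the prisms' lateral faces must remain simple polyhedra after the small corner-removal step that was used in 2D to restore simplicity of $L_a$; the analogous operation removes a thin sliver along an edge. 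I would handle all of this by working with prisms of uniform cross-section along each hinge-edge so that the entire 3D construction factors as (planar free-region construction)~$\times$~(edge interval), thereby inheriting interior-disjointness and the refinement relations directly from the already-verified 2D lemmas. Once these geometric verifications are complete, the inductive assembly is identical to that of Theorem~\ref{thm:common-refinement}, and the conclusion $H\prec P_i$ for all $i$ follows.
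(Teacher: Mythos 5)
Your high-level architecture is exactly the paper's: obtain a common unhinged dissection of the $P_i$ into links $\{L_i\}$ (existence guaranteed by Sydler, though the paper is careful to note that its algorithm takes this dissection as given, since no known proof that the Dehn invariant suffices is explicitly algorithmic), hinge the links into a tree-like figure $A$, and induct over the target polyhedra by reducing everything to a 3D rooted (pseudo)subtree movement lemma, tetrahedralizing links before each movement. The gap is in the one step you yourself flag as the obstacle, and your proposed resolution does not resolve it. You want the 3D gadget to factor as (planar construction) $\times$ (edge interval), transporting prisms of uniform cross-section whose axes run along the hinge-edge direction. But the transport path from $b$ to $b'$ is a curve on the $2$-manifold $\partial B$: it turns within faces, crosses edges of the surface triangulation at varying angles, and joins two hinge edges $b$ and $b'$ that are in general non-parallel and of different lengths. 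There is no globally consistent ``interval factor'': the would-be prism axis changes direction from segment to segment (and can become tangent to the path), and a chain of uniform-cross-section prisms cannot fold into a single fan around the destination edge $a$ unless all of the axes are parallel to $a$. So interior-disjointness and the two required configurations do not ``carry over verbatim'' by a product argument.

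The paper's gadget is genuinely different at exactly this point. It attaches at points $1/3$ of the way across the oriented edges $a$ and $b'$, chooses a path $\gamma$ on $\partial F$ that crosses triangulation edges orthogonally, thickens $\gamma$ to a strip $\Gamma$ of small uniform width $2\omega$, subdivides $\Gamma$ into small rectangles and pairs of right triangles, and erects a pyramid of small height $h$ over each such region. Hinged along their shared edges, these pyramids form a chain that folds into a compact bundle of octahedra and triangular bipyramids around a single short segment of length $2\omega$ --- the 3D analog of the kite-sweep --- which fits inside a cylindrical free wedge at the destination edge no matter how $\gamma$ turned and independently of the lengths of $a$, $b$, and $b'$. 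If you replace your prism product with this strip-plus-pyramids construction (and its folded bundle), the rest of your argument, including the inductive assembly copied from Theorem~\ref{thm:common-refinement}, goes through as you describe.
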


Note that our algorithms assume that the (unhinged) dissection is given.
None of the proofs that Dehn's invariant is sufficient are explicitly
algorithmic, so it remains open whether one can compute a dissection
when it exists.  (We suspect, however, that this may be possible
by suitable adaptation of an existing proof.)

All of the following definitions are 3D analogs of the definitions
given in Section~\ref{sec:terminology}.  The \term{boundary} $\partial
A$ of a hinged figure $A$ is the $2$-manifold (or collection of
disjoint $2$-manifolds) formed by identifying faces of links as
follows: (1) for each non-hinge edge $e$ of a link $\ell$, the two
faces adjacent to $e$ are connected along their common edge, and (2)
for each hinge edge $e$, each pair of adjacent faces of adjacent links
around $e$ are joined along their common edge.  The \term{incidence
graph} of a hinged figure, the notions of \term{tree-like} and
\term{chain-like}, and the concept of \term{refinement} are unchanged.

The proof will be as follows: First we will describe a revised
notion of \term{free-regions} for tetrahedra. Next, we illustrate the
technique for moving rooted subtrees and for moving rooted
pseudosubtrees, under the assumption that each link is a
tetrahedron. By tetrahedralizing the links before each pseudosubtree
movement, these assumptions lose nothing. The rest of the proof
remains unchanged.

\subsection{Defining Free Regions}

We begin by defining free regions for a tetrahedron $T$. Choose an
angle $\alpha$ smaller than the smallest dihedral angle of $T$'s six
edges. For each face $\phi$ of $T$, let $\freee_T(\phi)$ be the
tetrahedron inside $T$ whose base is $\phi$ and whose base dihedral
angles are $\alpha/3$.

For each edge $e$ of $T$, construct a cylinder $C_e$ of length
$\frac{2}{3}|e|$ centered at the midpoint of $e$ with axis along $e$.
Each cylinder has radius $r$, chosen small enough so that these six
cylinders do not intersect, and also so that for each edge $e$, $C_e$
does not intersect $\freee_T(\phi_3(e))$ and $\freee_T(\phi_4(e))$ where
$\phi_3(e)$ and $\phi_4(e)$ are the faces not adjacent to $e$.

Let $\freee_T(e)$ be the wedge of $C_e$ of angle $\alpha/3$ centered
within the dihedral angle of $T$ at $e$. By choice of $\alpha$ and
$r$, $\freee_T(e)$ will not intersect $\freee_T(\phi)$ for any face
$\phi$. These ten regions $\freee_T(\cdot)$ are the desired free
regions for tetrahedron $T$.

\subsection{Moving Rooted Subtrees}

There are two ways to join a pair of rooted subtrees $(A,a)$ and
$(B,b)$ (where $a$ and $b$ are hinges or edges of their respective
figures), as each edge has two possible orientations.  For each rooted
subtree movement of $(A,a)$ from $(B,b)$ to $(B,b')$, we will treat
$a$, $b$, and $b'$ as \emph{oriented} edges, and we will join them so
that the orientations of the joined edges match.

We may now illustrate the analog of Theorem~\ref{thm:move_subtree}:
\begin{thm}\label{thm:move_3d_subtree}
  For any two tree-like hinged figures $F$ and $F'$ related by the
  rooted subtree movement of $(A,a)$ from $(B,b)$ to $(B,b')$ for
  oriented edges $a$, $b$, and $b'$, there is a common refinement $G$,
  i.e. $G\prec F$ and $G\prec F'$.
\end{thm}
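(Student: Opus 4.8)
The plan is to reproduce, essentially verbatim at the strategic level, the two-step construction of Theorem~\ref{thm:move_subtree}, replacing its planar ingredients---thin isosceles triangles hinged at base vertices, chains, and kite-sweeps---by edge-hinged three-dimensional analogs supported by the free-regions $\freee_T(\cdot)$ just defined for tetrahedra. As before, I would cut a ``chain'' of thin pieces away from $B$ along a path running from the oriented edge $b$ to $b'$, cut a matching ``kite-sweep'' of thin pieces away from the tetrahedral link containing $a$, and then hinge the two resulting chains so that in one configuration every piece returns to the hole it was cut from (certifying $G\prec F$) and in the other the two chains swap roles (certifying $G\prec F'$). Since $B$ is tree-like its boundary $\partial B$ is a single $2$-manifold, so I would first fix a simple polygonal transport path $\gamma$ on $\partial B$ from $b$ to $b'$, subdividing it so that each maximal straight sub-segment lies either along a single edge or inside a single face, with transverse extent below the relevant free-region radius $r$.

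Next I would specify the gadgets. The planar sliver $\tri{\beta}{\ell}$ becomes a thin \emph{dihedral wedge}: a flat-faced simple polyhedron with a distinguished base edge of length $\ell$, subtending a small dihedral angle $\beta$ about that edge, small enough to sit inside the cylindrical wedge $\freee_T(e)$ of whichever edge it is excavated from. Where $\gamma$ crosses the interior of a face $\phi$ I would instead cut a thin tetrahedral sliver from $\freee_T(\phi)$, whose only job is to carry the chain across the face and reorient it from one incident edge to the next. Hinging these wedges and tetrahedra in order along their base edges yields the exact three-dimensional analog of $\chain{\beta}{\ell_1,\ldots,\ell_t}$, with distinguished initial and final oriented edges. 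At $a$ I would cut the \emph{kite-sweep}: a fan of thin wedges sharing $a$ as a common base edge, of total dihedral angle less than $\alpha$ so the whole fan fits inside $\freee_{L_a}(a)$, trimming a small corner so the mutilated link stays a simple polyhedron exactly as in the planar ``small corner'' step. Re-hinging these gives a second chain, and I would assemble $G$ by joining $A$, $B$, and the two chains along the oriented edges $a,b,b'$ and the chain endpoints in the same pattern by which $\join{G_a}{\cdot}{G_b}{\cdot}$ was formed in Theorem~\ref{thm:move_subtree}.

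The verification then has the same two-configuration shape. In the first configuration each cut piece reoccupies its original slot---the path-chain refilling the thin holes excavated along $\gamma$ and the fan-chain refilling the dihedral slots of the kite-sweep at $a$---so $G$ refines $F$. In the second the chains interchange, with the fan-chain tiling the holes along $\gamma$ and the path-chain reassembling into the fan about $a$, so $G$ refines $F'$. Because $a,b,b'$ are treated as oriented edges and joined with matching orientation, the identification of base edges is consistent in both configurations, so the two gluings are legitimate refinements and no cyclic-order constraint around any hinge is violated.

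The genuinely three-dimensional difficulty, and the step I expect to be the main obstacle, is that hinges are now one-dimensional edges rather than points: the transport path crosses a $2$-manifold, and the successive edges $\gamma$ visits are in general neither parallel nor coplanar. I must therefore design a single family of wedge and tetrahedral pieces, together with the dihedral angles of the fan at $a$, so that the same pieces simultaneously (i) tile the thin prismatic holes left along $\gamma$ in their natural positions and (ii) reassemble into a consistent dihedral fan around $a$---all while hinging only along complete edges and never protruding outside the cylinders $\freee_T(e)$ or the face-tetrahedra $\freee_T(\phi)$. Making these wedges honest flat-faced simple polyhedra that close up correctly at the turns between non-coplanar incident edges, and checking that the angle budget $\alpha$ and radius budget $r$ are never exceeded, is where the careful work concentrates; once such an interchangeable wedge/tetrahedron gadget is in hand, the remainder of the argument is formally identical to the planar proof, and (as the paper notes) the pseudosubtree and assembly steps then carry over unchanged.
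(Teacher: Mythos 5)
Your high-level plan matches the paper's: fix a transport path $\gamma$ on the boundary from $b$ (really, from $a$'s neighborhood) to $b'$, excavate a chain of thin pieces along it, excavate a matching ``kite-sweep'' at $a$, and let the two chains trade places between the two configurations, exactly as in Theorem~\ref{thm:move_subtree}. But the step you yourself flag as ``where the careful work concentrates''---designing one family of pieces that simultaneously tiles the holes along $\gamma$ and reassembles into a dihedral fan about $a$---is precisely the content of the theorem, and your sketch of that gadget points in a direction that does not work. You propose dihedral wedges hinged along the \emph{edges of the links} (excavated from the cylinders $\freee_T(e)$), with tetrahedral slivers in the face regions $\freee_T(\phi)$ ferrying the chain across faces. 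Consecutive link edges encountered by a path on the boundary $2$-manifold are in general skew, of different lengths, and do not share endpoints, so wedges hinged along them cannot be concatenated into a single chain the way base-vertex-hinged triangles can in the plane; nor is there any reason that wedges cut to fit slots of varying edge lengths and dihedral angles could be re-fanned around the single edge $a$.

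The paper's resolution is the key idea you are missing: it never hinges along link edges at all. It thickens $\gamma$ into a strip $\Gamma\subset\partial F$ of \emph{constant} width $2\omega$, subdivides $\Gamma$ transversally into small rectangles and right-triangle pairs, and erects a low pyramid of height $h$ over each cell. These pyramids hinge along the \emph{transverse} subdivision segments, every one of which has length $2\omega$; because all hinge edges are congruent and the pyramids come in matched pairs, the whole chain folds up into a fan of octahedra and triangular bipyramids around a single segment $t$ of length $2\omega$, which for $h$ small enough fits inside the free wedge at $a$. A duplicate chain is then carved out of $A$ at $a$ \emph{in this folded form} (and re-tetrahedralized via Chazelle to keep links simple), after which the two chains swap roles exactly as in 2D. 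Without some equivalent device forcing all hinge edges to be congruent and coaxially stackable, your construction does not go through, so as written the proof has a genuine gap at its central step.
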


\begin{proof}
The proof is in three parts, which we outline below.

We first choose the boundary path.
Let $p_a$ be the point $1/3$ across edge $a$, and let $\phi_0$ be the
face of $F$ to the left of edge $a$, i.e. the face so that edge $a$
traces its boundary counterclockwise (as seen from the
outside). Likewise, let $p_{b'}$ be the point $1/3$ across $b'$, and
$\phi_1$ the face to $b'$'s right. Triangulate the boundary $\partial
F$, and let $\gamma$ be a piecewise linear path along $\partial F$
from $p_a$ to $p_{b'}$ that passes through no vertices of the
triangulation, crosses each edge of each triangle orthogonally (as
$\partial F$ is locally flat around each edge), begins in face $\phi_0$, and ends in face $\phi_1$. Without loss of
generality, $\gamma$ does not cross itself, as loops may be
eliminated. We may also assume that all turn angles of $\gamma$ are at
most $90^\circ$, by truncating sharp turns.

Next, we thicken the boundary path.
We now choose a small $\omega$ and form two paths $\gamma_\ell$ and
$\gamma_r$ by offsetting $\gamma$ by a constant width of $\omega$ to
the left and right, respectively.  The value $\omega$ is chosen small
enough to satisfy the following conditions: (1) $2\omega$ is smaller
than the free-radius of $a$; (2) $2\omega < \min\{|a|,|b|\}/3$; (3)
paths $\gamma_\ell$ and $\gamma_r$ have the same number of segments as
$\gamma$; (4) the region $\Gamma$ between $\gamma_\ell$ and $\gamma_r$ contains
no vertices of the triangulation of $\partial F$ and does not
intersect itself.  In essence, we have just thickened the path
$\gamma$ to have width $2\omega$.

\ifabstract
\begin{wrapfigure}{r}{2.3in}
\else
\begin{figure}
\fi
  \centering
  \vspace{-1.5ex}
  \includegraphics[scale=0.65]{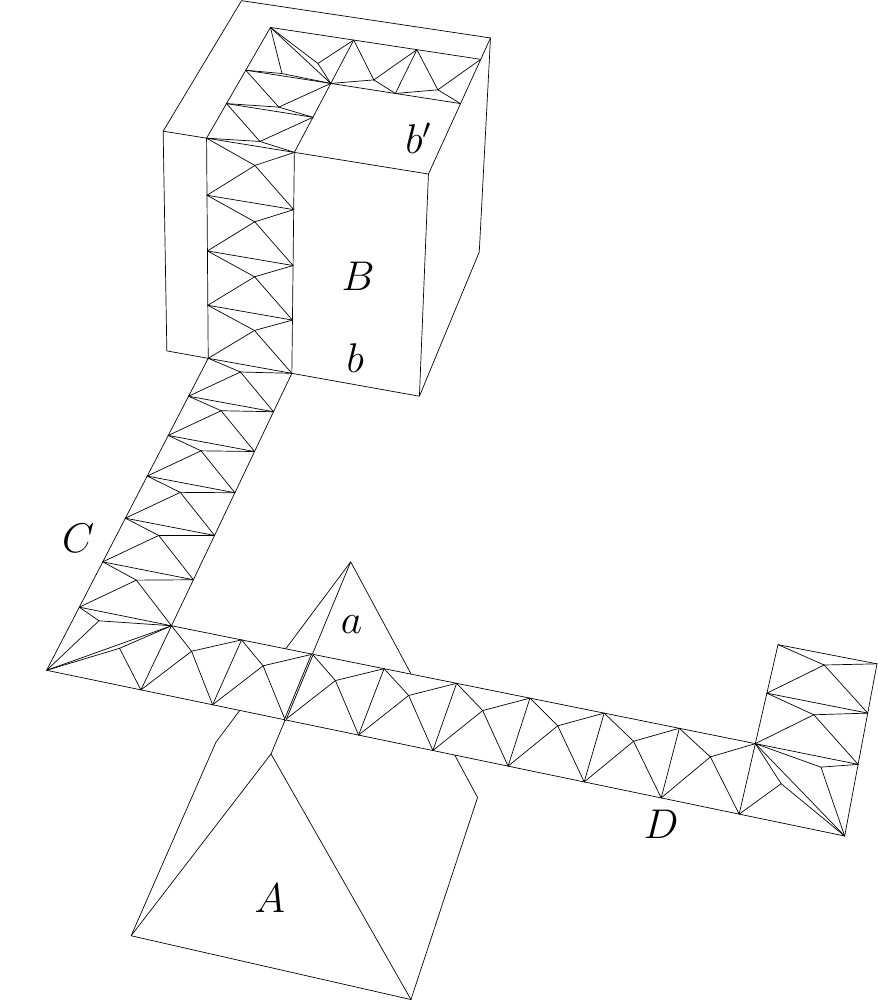}
  \caption{Moving rooted subtree $(A,a)$ from edge $b$ to edge $b'$.}
  \label{fig:threed}
\ifabstract
\end{wrapfigure}
\else
\end{figure}
\fi

Finally, we build a pyramid chain whose base is the region $\Gamma$,
as follows. We first divide $\Gamma$ into regions that will serve as
the bases. Each time $\Gamma$ crosses an edge of the triangulation of
$\partial F$, draw the intersection of $\Gamma$ with this edge; it
intersects $\gamma$ at right angles. At each rightward turn of
$\Gamma$, let $v_r$ be the vertex of $\gamma_r$ at this turn and
$v_\ell$ that of $\gamma_\ell$; draw the perpendiculars from $v_r$ to
the two edges incident with $v_\ell$, as well as segment
$v_rv_\ell$. Perform this procedure in mirror image for all leftward
turns of $\Gamma$.
These drawn segments divide $\Gamma$ into rectangles and pairs of
congruent right triangles. We subdivide each such rectangle of
dimensions $2\omega\times s$ into $2k$ rectangles of dimensions
$2\omega\times \frac{s}{2k}$, where $k$ is chosen large enough so that
$\frac{s}{2k} \le 2\omega$.

We now carve out pyramids based at each of these regions along
$\Gamma$.  Let $\beta$ be the free region angle at edge $a$ of $A$.
For some sufficiently small $h$ (to be specified soon), form for each
region $R$ along $\Gamma$ the pyramid whose base is $R$ and whose vertex is at
height $h$ above the center or centroid of region $R$. These
pairwise-congruent pyramids hinge along their common edges to form a
chain $C$ of pyramids whose base is $\Gamma$. This chain $C$ may be
folded into a ``\term{kite-sweep}'' of octahedra and triangular
bipyramids at a common segment $t$ having length $2\omega$. If $h$ is
small enough so that the total dihedral angle around $t$ is at most
$\beta$, then this chain can be seen to fit within a cylindrical wedge
of radius $2\omega$, angle $\beta$, height $2\omega$, and axis along
$t$. Thus, carve out a chain $D$ duplicate to $C$ from $A$ based at
the start of $\Gamma$ in this folded form. To ensure that $A$ is
formed by simple polyhedra, we refine this link into a hinging of
tetrahedra, which is possible by \cite{Chazelle-1984}.

Finally, hinge the mutilated $A$, the mutilated $B$, and chains $C$
and $D$ as illustrated in Figure~\ref{fig:threed}; for the same
reasons as in Theorem~\ref{thm:move_subtree}, this hinged figure forms
a refinement of both $F$ and $F'$.
\end{proof}

\subsection{Moving Rooted Pseudosubtrees}

A generalization of Theorem~\ref{thm:move_pseudosubtree} follows
quickly, along the lines of our generalization of
Theorem~\ref{thm:move_subtree} described in the previous
subsection. It is straightforward to obtain the following result:

\begin{thm}
Take tree-like figures $F$ and $F'$ related by the rooted-subtree
movement of $(A,a)$ from $(B,b)$ to $(B,b')$ as in
Theorem~\ref{thm:move_3d_subtree}, and suppose $G\prec F$. Then there
exists a common refinement $H$ of $G$ and $F'$.
\end{thm}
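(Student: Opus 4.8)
The plan is to run the argument of Theorem~\ref{thm:move_pseudosubtree} at the same level of abstraction, substituting the three-dimensional gadgets of Theorem~\ref{thm:move_3d_subtree} for their planar counterparts. First I would pass to $G$'s point of view and identify the pseudohinge $h$ of $F$ arising from the identified oriented edges $a$ and $b$; let $\{h_i \mid 1 \le i \le n\}$ be the hinges (edges) of $G$ each bordered by a link $L_i^a$ in $A$ and a link $L_i^b$ in $B$. Exactly as in the planar case, I may assume these $2n$ links are the only links of $G$ meeting $h$ and that they occur in the cyclic order $L_1^a,\ldots,L_n^a,L_n^b,\ldots,L_1^b$ around $h$. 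After tetrahedralizing the links of $G$ --- permissible by the reduction noted at the start of this section --- every free region is of the tetrahedral type defined above; write $r$ for the smallest free radius and $\beta$ for the smallest free angle among the edges relevant to the construction.

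Next I would choose a boundary path $\gamma$ on $\partial F$ from $p_a$ to $p_{b'}$, thicken it to a width-$2\omega$ region $\Gamma$, and subdivide $\Gamma$ into base cells, all exactly as in Theorem~\ref{thm:move_3d_subtree}. Because $G\prec F$, the path $\gamma$ crosses the images in $\partial F$ of boundary faces of several links of $G$; declaring every point of $\partial G$ collocated with a link vertex (and the point $b'$) to be a flat vertex, I break $\Gamma$ into its maximal contiguous portions lying within a single link of $G$, say at transition indices $i_0<\cdots<i_s$ as in Theorem~\ref{thm:move_pseudosubtree}. On each contiguous portion I carve out a pyramid chain based on that stretch of $\Gamma$, fold it into the kite-sweep of octahedra and triangular bipyramids of Theorem~\ref{thm:move_3d_subtree}, and seat that folded chain in the free-wedge at the transition edge. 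The resulting chains $C_j$ play precisely the role of the planar $C_j$: each fills the pyramid holes along $\gamma$ in one configuration and the transition free-wedges in the other.

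The heart of the construction is the refinement around the pseudohinge. For each $h_i$ I would carve a folded kite-sweep of total dihedral angle $i\beta/n$ from $L_i^a$, which seats in the free-wedge at $h_i$ by the choice of $r$ and $\beta$, and then split each pyramid of that kite-sweep into a smaller pyramid of dihedral thickness $(i-1)\beta/n$ on the same base together with a solid ``kite'' wedge of dihedral thickness $\beta/n$. This is the three-dimensional analog of slicing a base-angle-$i\beta/n$ triangle into a base-angle-$(i-1)\beta/n$ triangle plus the planar kite with angles $\beta/n$, $180^\circ+2(i-1)\beta/n$, $\beta/n$, $180^\circ-2i\beta/n$. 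The smaller pyramids hinge into a chain $D_i^a$ and the solid kites into a chain of kites $E_i^a$; I hinge $D_i^a$ and $E_i^a$ at $h_i$ and reattach $L_i^b$ to the far end of $E_i^a$, mirroring the planar hinging pattern.

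It remains to exhibit the two configurations. As always, returning every piece to the cell from which it was cut shows $H\prec G\prec F$. For $H\prec F'$, the chain $D_i^a$ fills the kite-sweep vacated in $L_{i-1}^a$ (with $L_n^a$'s wedge left empty), while the kite-chains $E_1^a,\ldots,E_n^a$ telescope together into a refinement of a single pyramid chain bridging the two halves, exactly as in the final step of Theorem~\ref{thm:move_pseudosubtree}. I expect this telescoping to be the one genuinely three-dimensional obstacle: the dihedral increments $\beta/n$ and the pyramid height $h$ must be chosen so that the $n$ solid kite-wedges meet face-to-face to tile one unbroken pyramid chain, and so that each folded kite-sweep nests inside its free-wedge without self-intersection in both configurations. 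Checking that the octahedra and triangular bipyramids close up correctly --- the spatial replacement for the elementary planar angle bookkeeping quoted above --- is where the real work lies; the remainder is a direct transcription of Theorems~\ref{thm:move_subtree}, \ref{thm:move_pseudosubtree}, and~\ref{thm:move_3d_subtree}.
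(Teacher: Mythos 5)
Your proposal follows exactly the route the paper intends: the paper gives no proof of this theorem at all, asserting only that it ``follows quickly'' by running the pseudosubtree argument of Theorem~\ref{thm:move_pseudosubtree} with the 3D gadgets of Theorem~\ref{thm:move_3d_subtree}, which is precisely what you carry out. Your sketch is in fact more detailed than the paper's treatment, and you correctly identify the one point (the telescoping of the solid kite-wedges around the pseudohinge) where genuinely three-dimensional verification is needed.
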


The rest of the proof of Theorem~\ref{thm:3D-common-refinement}
follows the proof of Theorem~\ref{thm:common-refinement} unchanged.
We obtain Theorem~\ref{thm:main3D} as an immediate corollary.

\subsection{Higher Dimensions}

We believe, although we have not verified, that our techniques generalize
further to refining dissections of polyhedral solids in arbitrary dimensions
into equivalent hinged dissections.  Again we obtain only configurations,
not folding motions, for each desired polyhedral solid.
Also, it is unknown when common (unhinged) dissections exist in 5D and higher
\cite{Dupont-Sah-1990}, although the solution in 4D is again the Dehn invariant
\cite{Jessen-1968}.


\section*{Acknowledgments}

This work arose during a series of open-problem sessions for
an MIT class on Geometric Folding Algorithms (6.885 in Fall 2007).
We thank the other participants of those sessions for providing
a productive and inspiring environment.

\ifabstract
\let\realbibitem=\bibitem
\def\bibitem{\par \vspace{-1.2ex}\realbibitem}
\fi

\begin{small}

\bibliography{hinged}
\bibliographystyle{alpha}

\end{small}

\ifabstract

\appendix

\section{Appendix}

\subsection{Details of Main Theorem}

\begin{proof}[of Theorem~\ref{thm:common-refinement}]
  By the Lowry-Wallace-Bolyai-Gerwien Theorem stated in the introduction,
  there exists a common
  decomposition of $P_1,\ldots,P_n$ into finitely many polygons
  $\{L_i\mid 1\le i\le k\}$; hinge these links together to form a
  tree-like hinged figure $A$. Now suppose we have a tree-like
  refinement $B_{t-1}$ that simultaneously refines $A$ and
  $P_1,\ldots,P_{t-1}$; we'll find a refinement $B_{t}\prec B_{t-1}$
  that is also a refinement of $P_{t}$ (the base case $t=1$ is
  realized by $A$ itself).

  Let $A_t$ be a tree-like hinging of links $\{L_i\}$ that refines
  $P_t$.  Since $B_{t-1}$ refines $A$, it suffices by repeated
  application of Theorem~\ref{thm:move_pseudosubtree} to show that
  $A_t$ may be obtained from $A$ by finitely many rooted subtree
  movements: $B_t$ is formed from $B_{t-1}$ by performing the
  corresponding refinements of $B_{t-1}$ sequentially according to
  Theorem~\ref{thm:move_pseudosubtree}.

  First, re-index the links $L_i$ so that for each $1\le m\le k$, the
  subfigure of $A_t$ formed by links $L_1,\ldots,L_m$---denoted
  $A_t|_{L_1,\ldots,L_m}$---is connected.  We rearrange $A$
  inductively.  Suppose that $A$ has been rearranged by rooted subtree
  movements into a tree-like figure $A_t^m$ so that the subfigure
  $A_t^m|_{L_1,\ldots,L_m}$ of $A_t^m$ formed by links
  $L_1,\ldots,L_m$ is equivalent to $A_t|_{L_1,\ldots,L_m}$. (We start
  with the base case $A_t^1 = A$.)  We now move $L_{m+1}$ into place.

  Let $p$ and $q$ be the boundary points of $L_{m+1}$ and
  $A_t|_{L_1,\ldots,L_m}$, respectively, which are identified in
  $A_t$. Also let $r$ be the hinge of $L_{m+1}$ closest to
  $A_t^m|_{L_1,\ldots,L_m}$ in $A_t^m$; i.e., in the incidence graph,
  $r$ is the hinge whose removal would separate $L_{m+1}$ from all
  links $L_1,\ldots,L_k$ (this vertex exists since
  $A_t|_{L_1,\ldots,L_m}$ is connected).

  We now perform two rooted subtree movements, as illustrated in
  Figure \ref{fig:pseudotwostep}. First, break $A_t^m$ into two rooted
  trees $(S,r)$ and $(T,r)$ at $r$, so that $L_{m+1}\in S$ and
  $L_1,\ldots,L_m\in T$; move $S$ and rejoin in the form
  $\join{S}{r}{T}{q}$. Next, break this into the same two rooted
  subtrees $(S,r)=(S,q)$ and $(T,q)$, and move $T$ to rejoin in the
  form $\join{T}{q}{S}{p}$. In this way, the links $L_1,\ldots,L_m$
  are not disturbed, and $L_{m+1}$ is rooted properly, i.e. this new
  hinged pseudofigure is $A_t^{m+1}$. Repeat this procedure to
  finally obtain $A_t^k = A_t$ by rooted subtree moves, as desired.
\end{proof}

\subsection{Rectangle-to-Rectangle Construction}

Here we describe our hinged
dissection between any two equal-area rectangles.
This step is heavily based on the classical (non-hinged)
dissection, modified using our subtree
operations to allow all operations to be hinged.

Take two rectangles of equal area, $A$ and $B$, and
suppose $B$ has the smallest minimum side length.
We begin by aligning both rectangles with their shorter
edge on the horizontal axis
and longer edge on the vertical axis, and identifying
the two top left vertices. We then rotate $B$ counterclockwise
until its lowest vertex is horizontally aligned with the base of $A$.
Label the vertices $a_i$ and $b_i$ for $1 \leq i \leq 4$, starting
at the top left and moving clockwise.

At this point the horizontal cross-section of both rectangles
is equal. Now, cut
$B$ along edge $a_2 a_3$ (the right side), hinging at the bottom of
the cut, and rotate the extended portion of $B$ clockwise
by $\pi$ to cover a strip of $A$. Again, cut $B$, now along
the left side $a_1 a_4$, hinging at the bottom, and rotate it back in,
covering another horizontal strip of $A$. Continue in this
way until the remaining segment of $B$ extending past $A$'s
boundary is no longer enough to cover an entire horizontal
strip (see Figure~\ref{fig:recttorect_snake}).

Now consider the subtriangle $a_1 b_2 a_2$ of $B$. We cut
it horizontally at half its height, and vertically from
$b_2$, and rotate the resulting components out
to form a rectangular cap with the same width as $A$
(see Figure~\ref{fig:recttorect_cap}).

There are now two cases: either the remaining segment of $B$
extends up and left, or down and right. In the fortunate
former case, we can perform the entire transformation using
only classical-style manipulations: swing the extended portion
back into $A$. This will give a ``triangular'' base, similar
to the triangle that was on top of $A$, but offset horizontally
and wrapping through the edge of $A$. We can make this
rectangular as well by a nearly identical transformation:
cut horizontally at its vertical midpoint, and vertically
as shown in Figure~\ref{fig:recttorect_cap} so that the pieces line up
with the border of $A$ when we swing them out.
A quick case analysis shows that this always works.

Now consider the remaining case where the end of $B$ extends
down and to the right. We can directly reduce this to the
previous case: cut $B$ along edge $a_2 a_3$, hinging at
the {\it top}, and use rooted subtree movement to move this
subtree counterclockwise around the figure to line up with
the left side of $A$ (Figure~\ref{fig:recttorect_wrap}). The configuration
of the base is now as
though $B$ had extended up and left and we rotated the
extended piece down
into $A$ as before, the only change being the kite sweep
at the top vertex. Since this kite sweep can easily
be kept above the vertical midpoint of the triangle,
it doesn't interfere with our new cuts, and 
the same capping strategy works
without alteration.

With these steps, $B$ is transformed into a rectangle having the
same area and width (and therefore same height) as $A$.

\subsection{Proof of Pseudopolynomial Bound}~

\begin{proof}[of Theorem~\ref{thm:pseudopoly}]
Since our construction worked independently on each pair
of equal-area triangles, and the equal-area chainification
step itself was pseudopolynomial, it suffices to show that
our transformation is pseudopolynomial when applied to
a single pair of triangles.

We analyze the maximum number of pieces produced by
our dissection by considering the related question of the
smallest possible non-zero value that can be computed
at any intermediate step of the dissection. This value
gives a lower bound on, for instance, the smallest distance
between any two distinct vertices. If this distance is at
least inverse polynomial, then a simple area argument shows
that the number of pieces in the dissection is also at
most polynomial.

Our task, then, is to show that the smallest non-zero
value produced during the computation is indeed inverse
polynomial, a question that can be dealt with almost
entirely algebraically. We will use the bounds from
\cite{mehlhorn}. Under these bounds,
it suffices to show that if we view all numerical computations
as a DAG, then algebraic extensions, multiplications,
and divisions (by previously computed expressions) are never
nested to more than constant depth, and that addition
and subtraction, and multiplication or division by fixed
constants, are never nested to more than linear depth.
Under these constraints, if we are still able to compute
the coordinates of all vertices on our figure, we will be done.
For this section, refer to addition and subtraction, and
multiplication and division by fixed constants, as
simple arithmetic operations.

It is important in maintaining constant depth of
multiplications that we never rotate the same point
or vector more than a constant number of (nested)
times. For instance, while for simplicity we have
described adjacent kites as touching along a common
angle, to attain a pseudopolynomial bound we cannot
do this since each consecutive kite would give another
nested rotation. 
Instead,
whenever we need to perform a rotation, we will use
a rational approximation of the angle. We can approximate
an angle within an arbitrary error bound $\delta$
by using the ratios of integers that are polynomial in
$1/\delta$ (for instance by solving for an appropriate
Pythagorean triple). We can further ensure that we always
have plenty of room for such approximations by restricting
ourselves to free regions that use only half of the true
available angular space, so we know that there will also
be polynomial error tolerance built into any desired
angles, so any single rotation will still only involve
polynomial values. Kite sweeps will thus be spread out
near each other but with angles ensuring they do not
intersect.

Now let us consider the arithmetic/algebraic depth of the
expressions produced during each step of our transformation.
As mentioned above, for the purposes of analysis we
consider the procedure as beginning with two equal-area
rectangles and then map back to the original triangles
by adding the appropriate pseudocuts.

First: When we rotate the narrower rectangle to have the
same horizontal cross-section as the first, this requires a
quadratic extension, as well as a constant number of
multiplications and divisions to rotate all the vertices of
the second rectangle.

Second: snaking the second rectangle back and forth along
the first (Figure~\ref{fig:recttorect_snake}) requires
only additions and subtractions, since
all rotations are by $\pi$. Since each pass of the rectangle
must cover a strip whose height is at least the width of
the second rectangle (which is at least half the height
of its originating triangle), we require a linear number
of passes,
so the depth of simple arithmetic operations here is at
most linear.

Third: Capping the triangles at the top and bottom
(Figure~\ref{fig:recttorect_cap}). This also
requires only rotations by $\pi$, except for the extra step
of finding the midpoints of the triangles' ascent, which is just a
simple arithmetic operation (division by 2).

Fourth: Moving $B$'s subtree back to the left side of
$A$ (Figure~\ref{fig:recttorect_wrap}), if
necessary. The base vertices of the triangles cut out of the
border of the figure can easily be placed on rational points
with low relative denominator.
We will choose the inner angle of the triangles, as described
earlier, to remain within the free area along the border while
requiring a constant number of multiplications and divisions
with suitable rational numbers. After
choosing the interior angle, finding the interior vertex of a
triangle can be done by intersecting its two edges, which
also takes a constant number of multiplications and divisions,
and can be done independently for each internal vertex to
prevent nesting. Note similarly that each edge traversed by
the path can be handled independently.

This still leaves the kite sweep inside the moved subtree.
We can produce this from the triangle chain:
each triangle pair can be made into a kite by rotating
by $\pi$. Once this is done for each triangle pair,
we place the kites inside the appropriate vertex by
choosing approximate rotation angles as described
above so the kites are non-overlapping. Each rotation requires
a constant number of multiplications per vertex, but again each
kite can be handled independently (and uses only expressions
from its originating triangles, which were themselves computed
independently of other triangle pairs), so we still preserve constant
depth in our computations.

After this movement is complete, the bottom is capped identically to
the previous step.

Fifth: The pseudocuts. We consider these independently. Since
there are only four of them, if we show that any single pseudocut
increases the simple arithmetic depth of an existing figure
at most linearly, and the depth of remaining operations by
at most a constant, we will be done.

Consider a single pseudocut. We know that it intersects with
at most a polynomial number of edges for the simple
inductive reason that so far there are only polynomially many.
Furthermore, it lies in our existing algebraic
extension, since all cuts were simple rational cuts with
respect to the original input triangles, and the transformation
to rectangles involved only rotations by $\pi$ and the single
aligning rotation we performed in the first step. Thus, the
coordinates defining the cut are so far at only constant
computational depth.

Now, the initial vertices we need to add to our figure
are the intersections
of the cut line and any incident edges in the figure.
Each of these can be computed independently of the rest,
and each requires a constant
number of multiplications and divisions. After this we need to
cut out the triangle chains along the boundary of the cut, as
well as the kite sweep at each joint. Once more we exploit the
fact that each kite can be dealt with mostly independently:
the bases of the triangles are simple rational points along
the cut edges. The interior angles can be approximated as before,
though now we may need to compute a polynomial number of them
because of the nested triangles. However, in that case each
interior angle can also be dealt with independently, so even
though there are many interior vertices corresponding to each
base line, they are all independently still at constant
computational depth. The kite sweeps are dealt with exactly
as above, a rotation by $\pi$ followed by an approximate
rotation into the appropriate vertex while avoiding intersections.
As all of these are still independent (relying only on the
constant-depth computation to produce the appropriate base
vertices,
plus the constant-depth computation to produce the matching
interior vertex, plus the constant-depth computation to produce
a suitable rotation angle for the kite), all of this is still
done in only constant depth.

Sixth: Unaltered subtree movement. This follows from
the same argument as the ordinary subtree
movement from the fourth step. The changes in the specific
vertex used for the kite sweep make no difference to the
computational depth required. Since there are only two
such subtree movements, the added simple arithmetic
depth is again linear, and depth of remaining operations
is again constant.

Thus, we see that the computations of all steps together
remain within the required bounds, and thus all vertices
are indeed at least an inverse polynomial distance apart.
\end{proof}
\fi

\end{document}